\journal{Nonlinear Analysis: Hybrid Systems}
\newtheorem{theorem}{Theorem}[section]
\newtheorem{corollary}[theorem]{Corollary}
\newtheorem{proposition}[theorem]{Proposition}
\newdefinition{definition}{Definition}
\newdefinition{rmk}{Remark}
\newdefinition{example}{Example} %ADDED
\begin{document}

\begin{frontmatter}

%% Title, authors and addresses

%% use the tnoteref command within \title for footnotes;
%% use the tnotetext command for theassociated footnote;
%% use the fnref command within \author or \address for footnotes;
%% use the fntext command for theassociated footnote;
%% use the corref command within \author for corresponding author footnotes;
%% use the cortext command for theassociated footnote;
%% use the ead command for the email address,
%% and the form \ead[url] for the home page:
%% \title{Title\tnoteref{label1}}
%% \tnotetext[label1]{}
%% \author{Name\corref{cor1}\fnref{label2}}
%% \ead{email address}
%% \ead[url]{home page}
%% \fntext[label2]{}
%% \cortext[cor1]{}
%% \address{Address\fnref{label3}}
%% \fntext[label3]{}

\title{Explicit solutions to utility maximization problems in a regime-switching market model via Laplace transforms}

%% use optional labels to link authors explicitly to addresses:
%% \author[label1,label2]{}
%% \address[label1]{}
%% \address[label2]{}

\author[Ocejo]{Adriana Ocejo}
\ead{amonge2@uncc.edu}
%\cortext[cor1]{Corresponding author.}

\address[Ocejo]{Department of Mathematics and Statistics, University of North Carolina at Charlotte, 9201 University City Blvd., Charlotte, NC, 28223.}

\begin{abstract}
We study the problem of utility maximization from terminal wealth in which an agent optimally builds her portfolio
by investing in a bond and a risky asset.
The asset price dynamics follow a diffusion process with regime-switching coefficients modeled by a continuous-time finite-state Markov chain. 
%We consider a class of iso-elastic utility functions, which includes the power utility.
We consider an investor with a Constant Relative Risk Aversion (CRRA) utility function.
We deduce the associated Hamilton-Jacobi-Bellman equation to construct the solution and the optimal trading strategy
and verify optimality by showing that the value function is the unique constrained viscosity solution of the HJB equation.
By means of a Laplace transform method, we show how to explicitly compute the value function and illustrate the method with the two- and three-states cases.
This method is interesting in its own right and can be adapted in other applications involving hybrid systems and using other types of transforms with basic properties similar to the Laplace transform.
\end{abstract}

\begin{keyword}
Portfolio optimization \sep utility maximization \sep regime-switching \sep Laplace transform.
\MSC[2010] Primary 93C30 \sep 93E20; Secondary 37N40 \sep 49L20.
%% keywords here, in the form: keyword \sep keyword

%% PACS codes here, in the form: \PACS code \sep code

%% MSC codes here, in the form: \MSC code \sep code
%% or \MSC[2008] code \sep code (2000 is the default)
\end{keyword}

\end{frontmatter}

%% \linenumbers

%% main text

\section{Introduction}
In this paper we study an investment problem of an agent whose portfolio is constructed by investing in a bond and a risky asset,
whose price dynamics follow a diffusion process with regime-switching coefficients,
modeled by an observable continuous-time finite-state Markov chain. %short rate, drift and volatility.
The agent's objective is to maximize her expected utility from terminal wealth.

Changes of regime in financial markets have been empirically observed and may be due, for instance, to sudden changes in the economy or major political events.
Regime-switching processes were initially proposed by Hamilton, %in his economic studies with discrete time models
who studied the effect of incorporating shifts in the parameters of a discrete-time model, via an unobserved discrete time two-state Markov chain,
when analyzing yields on government bonds \cite{Ham89}.
Since then, %the literature on regime-switching modeling in finance has grown dramatically.
many empirical studies have argued that regime-switching modeling can help to better predict market prices behavior.
More recently, Pereiro and Gonz\'alez-Rozada \cite{Pereiro} analyzed the market index of a sample of stock markets worldwide to test for the presence of regimes.
They concluded that $68\%$ and $37\%$ of the emerging and developed stock markets, respectively, show the existence of regime-switching, including the SPX in USA.

Portfolio optimization problems in continuous-time date back to the works of Merton \cite{Merton69}, \cite{Merton71},
who proposed that the market risk is driven by a Brownian motion.
In the context of regime-switching dynamics, where an auxiliary Markov process dictates the market regime,
the utility maximization problem from terminal wealth has been studied under different assumptions on how regime information is available to the agent. % and criterions.

For partially observable regimes in which the Markov chain is hidden or not observed directly, % that is under the assumption of partial information
see Sass and Haussmann \cite{Sass}, Nagai and Runggaldier \cite{Nagai} and references therein.
The authors argue that explicit analytical solutions are very difficult to obtain and the optimal strategies and value function have to be determined numerically.

In the fully observable case and for an investor with logarithmic or power utilities,
Capponi and Figueroa-L\'opez \cite{Capponi&Figuroa} and Fu et al. \cite{Fuetal}
consider a portfolio that contains, besides a risk-free bond and a risky stock, an extra term.
In \cite{Capponi&Figuroa}, the authors take into account default risk and incorporate a defaultable bond into the portfolio.
They allow regime dependent short rate, drift, volatility and default intensities.
By separating the problem into pre- and post default optimization subproblems, they provide the associated verification theorems for each subproblem assuming that the associated HJB equation has a smooth solution, and construct the value functions as the solution of coupled linear systems of ordinary differential equations.
Fu et al. \cite{Fuetal} consider a portfolio that also contains an option written on the stock. They approximate the value function as the limit of a sequence of value functions of auxiliary problems.
More related to our paper with a portfolio built with a risk-free bond and a risky asset only,
Zhang and Yin \cite{Zhang&Yin} consider a fairly general setup for the utility function, including the power, logarithmic and exponential functions.
However, due to this generality, the HJB equation of the associated control problem is too difficult to solve explicitly. Then, they opt to tackle the problem using a singular-perturbation approach and successfully obtain \textit{near-optimal} allocation strategies.
%Typically, there is no closed-form solution for the utility maximization problem in an incomplete market (references?). In contrast, we show how to compute the solution explicitly in our model.

%Portfolio optimization problems have also been studied in the infinite horizon case, in which the agent aims to maximize the expected discounted utility from consumption.
%Sotomayor and Cadenillas \cite{Soto&Cade2009} consider a fairly general class of HARA utility functions and obtain explicit optimal allocation policies.
%\c{C}anako\u{g}lu and \"{O}zekici \cite{C&O} worked on the discrete time version of this problem, and concentrated on the exponential utility.

In this paper, we assume that the short rate, as well as the rate of return and volatility of the risky-asset depend on the market regime.
The state of the market is modeled by an observable continuous-time and finite-state Markov chain.
We allow the cash amount to be invested in the risky asset to be unbounded.
Due to this relaxed assumption on the control set, the HJB equation associated to the maximization problem is of
degenerate parabolic type. Therefore, a smooth solution $V$ cannot be assumed to exist and the classical verification result based on an application of It\^o's formula is not possible.

The notion of viscosity solutions has been successfully applied in many contexts when a smooth solution to a PDE equation is not expected to exist.
For instance, in the case of stochastic controlled problems, a non-exhaustive list includes the works of Lions \cite{Lions}, Duffie and Zariphopoulou \cite{Duffie&Zarip}, Duffie et al. \cite{Duffie_etal}, %Zariphopoulou (2001),
Kounta \cite{Kounta}, and others. % (add more related to finance!).
In the case of optimal stopping problems Pemy and Zhang \cite{Pemy&Zhang}, Li \cite{Li2016}, Bian et al. \cite{Bianetal}.
For a general overview of the theory of viscosity solutions of second order PDEs we refer to Crandall, Ishii and Lions \cite{Crandalletal},
and for their connection to stochastic optimization problems we refer to Fleming and Soner \cite{Fleming&Soner} . %Definition II.4.2 particularly
We shall show that $V$ is the unique \textit{constrained} viscosity solution of the associated HJB equation in an appropriate class of functions, in a sense to be specified later.

One of the main contributions of this paper is that
we present a simple methodology to explicitly compute the value function based on inverse Laplace transforms, which is a new idea in the literature on portfolio optimization problems.
We think that this idea can be easily applied to other optimization problems involving regime-switching diffusions and is interesting in its own right.

%A similar idea was previously applied by Liu et al.\cite{Liuetal} but to compute the characteristic function of the sojourn time of the Markov chain in a given state.
%This goes in the Introduction

The rest of the paper is organized as follows. The model dynamics and problem formulation is presented in Section \ref{sec:model}.
In Section \ref{Construction}, we heuristically construct a candidate trading (portfolio allocation) strategy and value function by martingale and dynamic programming arguments.
In Section \ref{sec:ver}, we show that $V$ is the unique \textit{constrained} viscosity solution of the associated HJB equation, in a sense to be specified later, and
a verification that the candidate solution satisfies all the necessary conditions, implying that it coincides with the value function.
We also obtain optimal allocation strategies in the form of feedback controls.
The Laplace transform method to compute the value function is presented in Section \ref{sec:Laplace} and
some numerical experiments are provided in Section \ref{sec:numerical}. The final section summarizes the main results.

\section{Model dynamics and problem setup} \label{sec:model}
%This is an extension of BLA and it
%is motivated by the empirical results of BLA.

Let $(\Omega,\mathcal{F},\mathbb{P})$ be a probability space which supports a Brownian motion $B=(B_t)_{t\geq 0}$
and a continuous-time Markov chain $Y=(Y_t)_{t\geq 0}$ with finite state space $\mathcal{M}=\{1,2,\ldots,m\}$ and generator $Q=(q_{i j})_{m\times m}$ which satisfies
\[
q_{i j}\geq 0 \quad \mbox{for $i\neq j$}, \qquad \sum_{j\in \mathcal{M}} q_{i j}=0, \qquad q_i:=-q_{i i}>0.
\]
We denote by $\mathbb{F}=(\mathcal{F}_t)_{t\geq 0}$ the $\mathbb{P}$-augmentation of the filtration generated by $B$ and $Y$.
Lemma 2.5 in \cite{Jacka&Mija} ensures that $B$ and $Y$ are independent. %so that $B$ and $Y$ are $\mathcal{F}_t$-adapted.

Let $P=(P_t)_{t\geq 0}$ and $S=(S_t)_{t\geq 0}$ denote the price of the bond and the risky asset, respectively.
We assume that these processes satisfy the Markov-modulated dynamics
\[
\begin{aligned}
dP_t    & = r(Y_t) P_t\,dt,   \\
dS_t    & = \mu(Y_t)S_t\,dt+\sigma(Y_t)S_t\,dB_t%, \qquad 0\leq t\leq T,
\end{aligned}
\]
where $r(i)>0$, $\mu(i)>0$ and $\sigma(i)>0$ denote the risk-free interest rate, the rate of return of the asset, and the volatility at regime $i$, respectively.
%We also assume that the asset pays dividends at the rate $\delta\geq 0$.

At every time $s\in[t,T]$, with $0\leq t\leq T$, the agent chooses the cash amount $\theta_s$ to be invested in the asset,
and allocates the rest of his wealth $X_s-\theta_s$ in the bond.
We do not impose a portfolio constraint, in particular short-selling is allowed ($\theta_s<0$).
Assuming the self-financing condition, his wealth process $X=(X_s)_{s\geq t}$ satisfies the {\it controlled} stochastic differential equation
\begin{equation} \label{eq:wealth_dyn}
dX_s= \theta_s[\mu(Y_s)-r(Y_s)]ds+ r(Y_s)X_s\,ds+\theta_s \sigma(Y_s)\,dB_s,
\end{equation}
with initial endowment $X_t=x\geq 0$,  %x>0 before.
%We assume that the agent is able to trade while her wealth is positive.
%{\color{red} So we consider the bankruptcy time $\tau_0:=\inf\{ s\geq t: X_s=0\}$
and impose that wealth stays non-negative
\[%begin{equation} \label{eq:absorbed}
X_s\geq 0, \qquad \mbox{for all $s\geq t$ a.s. }
\]
%We follow Fleming and Rishel definition of admissible control. Refer to pages 118 and 156.
We are interested in {\it feedback type} controls $\theta=(\theta_s)_{t\leq s\leq T}$ of the form
$\theta_s=\bar{\theta}(X_s,s,Y_s)$ for some function $\bar{\theta}(x,s,i):[0,\infty)\times [0,T]\times\mathcal{M}\mapsto \mathbb{R}$
such that for each $i$, $\bar{\theta}(\cdot,\cdot,i)$ is Borel measurable.
Sufficient conditions to ensure the existence and (pathwise) uniqueness of a solution of a regime-switching stochastic differential equations (c.f. Theorem 4.1 in \cite{Fleming&Rishel},\cite{Yin&Zhu}) call for linear growth and Lipschitz conditions on the coefficients. In the case of (\ref{eq:wealth_dyn}) with feedback controls, we shall assume that there exist constants $K_1,K_2>0$ such that for each $i$,
\begin{equation} \label{eq:coeff_cond}
|\bar{\theta}(x,t,i)|\leq K_1(1+|x|), \qquad \mbox{and} \qquad
|\bar{\theta}(x,t,i)-\bar{\theta}(y,t,i)|\leq K_2|x-y|.
\end{equation}

\begin{definition}
Given $X_t=x, Y_t=i$, a trading strategy $\theta=(\theta_s)_{t\leq s\leq T}$ is said to be \textit{admissible} if
it is of feedback type satisfying (\ref{eq:coeff_cond}),
the integrability condition
\begin{equation} \label{eq:admissibility_cond}
\mathbb{E}\left[\int_t^T \theta_s^2 ds \mid \mathcal{F}_t \right]<\infty
\end{equation}
holds, and the unique solution $X_s$ to the SDE in (\ref{eq:wealth_dyn}) using this $\theta$ satisfies the state constraint $X_s\geq 0$ for all $s\geq t$ a.s.
%This condition was added to have the stochastic integral $\int_s^T \theta_s \sigma(Y_s)dB_s$ well-defined a martingale.
%We may assume the weaker condition without the expectation though.
We denote the set of all admissible trading strategies by $\mathcal{A}(x,i,t)$. %no need to restrict at this point the class to feedback form.
\end{definition}

Note that the integrability condition (\ref{eq:admissibility_cond}) ensures that the stochastic integral in (\ref{eq:wealth_dyn}) is well-defined.
%is imposed in order to ensure that the SDE in (\ref{eq:wealth_dyn}) has a unique solution.
%strong solution
%\[
%X_s= e^{R_s}\left[x+\int_t^s e^{-R_u}\theta_u[\mu(Y_u)-r(Y_u)]du+\int_t^s e^{-R_u}\theta_u \sigma(Y_u)dB_u\right]
%\]
%with $R_s:=\exp\{\int_t^s r(Y_u)du\}$. WHY? This may not be true, some of the coefficients are not bounded away from zero.

The agent's risk preferences are represented by a utility function $U:[0,\infty)\rightarrow \mathbb{R}$ which is strictly increasing, strictly concave and twice continuously differentiable.
The associated \textit{Arrow-Pratt coefficient of absolute risk aversion}
$
A(x)=-U^{''}(x)/U'(x)
$
%completely characterizes the utility function (see citation for a discussion).
is interpreted as a measure of aversion to risk \cite{Pratt}. %Add something to expand on this notion.
In this paper, we consider utility functions with constant relative risk aversion (CRRA), that is, for which
\begin{equation}\label{eq:averParameter}
xA(x)=a, \qquad x>0,
\end{equation}
where $a$ is a positive constant.
%We concentrate on those that satisfy the following property of \textit{iso-elasticity}: for any $x,y>0$,
%\begin{equation} \label{eq:elasticity}
%U(xy)=U(x)f(y)
%\end{equation}
%for some $C^2((0,\infty))$ function $f(y)>0$.
%In particular, we must have that
%\[
%U(0):=\lim_{x\downarrow 0}U(x)=0.
%\]%we also define $U'(0):=\lim_{x\downarrow 0} U'(x)$. %In some instances, $U(0)$ and $U'(0)$ may not be finite, as in the case $U(x)=\ln(x)$.
We specialize to the power utility $U(x)=\frac{1}{\gamma}x^\gamma$, with risk-aversion parameter $\gamma<1,\gamma\neq 0$. Here $a=1-\gamma$.

%We will consider two types of utilities from the HARA (Hyperbolic Absolute Risk Aversion) class:
%the exponential utility $U(x)=-\frac{1}{\gamma}e^{-\gamma x}$, $\gamma>0$ and %from the CRRA class
%the power utility $U(x)=\frac{1}{\gamma}x^\gamma$, $\gamma<1,\gamma\neq 0$. %from the CARA class,
%and the logarithmic utility $U(x)=\ln(x)$.
%For the later, utility is only defined for positive wealth.

The agent's objective is to maximize the expected utility of wealth
\[
J(x,t,i;\theta)=\mathbb{E}[U(X_T)\mid X_t=x, Y_t=i]
\]
over all $\theta \in \mathcal{A}(x,t,i)$ and find the value function
\begin{equation} \label{OriginalProblem}
V(x,t,i)=\sup_{\theta \in \mathcal{A}(x,t,i)}J(x,t,i;\theta)
\end{equation}
for every $x\geq 0$, $t\in[0,T)$, $i\in \mathcal{M}$, with terminal condition $V(x,T,i)=U(x)$.

We will find an explicit solution to this problem. %for the class of utility functions satisfying the iso-elasticity property (\ref{eq:elasticity}).
In other words, we will find an optimal trading strategy $\theta^*$ which attains the maximal expected utility value
\[
V(x,t,i)=J(x,t,i;\theta^*)
\]
and compute the value function $V$ explicitly.
%To illustrate the results, we consider the power utility.

\section{Construction of the solution} \label{Construction}

By classical stochastic control arguments we know that $V(X_t,t,Y_t)$ is a supermartingale under an arbitrary strategy $\theta$ and a martingale under the optimal strategy $\theta^*$. This leads to the Hamilton-Jacobi-Bellman (HJB) equation associated with the stochastic controlled problem (\ref{OriginalProblem}), namely,
\begin{equation} \label{eq:HJBunconstrained}
\sup_{\theta\in \mathbb{R}} \mathbb{L}^\theta V(x,t,i)  =0,    \qquad  (x,t,i)\in [0,\infty) \times [0,T)\times \mathcal{M}
\end{equation}
%for every $x>0, i\in\mathcal{M}, t\in [0,T)$,
with terminal condition $V(x,T,i) = U(x)$ for $x\in [0,\infty)$,
where
\[
\mathbb{L}^\theta V(x,t,i)= V_t +\frac{1}{2}\theta^2 \sigma^2(i)V_{xx}
    + \theta(\mu(i)-r(i))V_x + r(i)x V_x+ QV(x,t,\cdot)(i).
\]
Here, we use the notation $f_x=\frac{\partial f}{\partial x}$, $f_{xx}=\frac{\partial^2 f}{\partial x^2}$, $f_t=\frac{\partial f}{\partial t}$ and
\[
Qf(x,t,\cdot)(i)=\sum_{j\neq i}[f(x,t,j)-f(x,t,i)]q_{ij}.
\]

It is important to remark that %the control $\theta$ is not bounded away from zero, and as a consequence
the HJB equation (\ref{eq:HJBunconstrained}) is a degenerate, second-order parabolic equation and therefore it may not have a smooth solution, or a solution may not even exist in the classical sense. For this reason, a solution to (\ref{eq:HJBunconstrained}) will be understood as a {\it viscosity solution} in a sense to be specified later in the next section.
Below, we heuristically construct a candidate optimal strategy $\theta^*$ and value function $V$ solving (\ref{eq:HJBunconstrained}) under the assumption that $V$ is sufficiently smooth. The rigorous verification of optimality and that such candidate is indeed the value function is presented in the next section.

Formally, since $\mathbb{L}^\theta$ is quadratic in $\theta$, and assuming that $V_{xx}<0$, the maximum is attained by
\begin{equation} \label{eq:candidatestrat}
\theta_t^*=-\frac{[\mu(Y_t)-r(Y_t)]V_x}{\sigma^2(Y_t)V_{xx}}
\end{equation}
%When there is only one regime, this is the so-called \textit{Merton's proportion}.
%This may be rewritten in terms of the coefficient of absolute risk aversion $R_a(x)=-\frac{U''(x)}{U'(x)}$ for some utilities?
and $V$ in (\ref{eq:HJBunconstrained}) solves the coupled nonlinear partial differential equation (PDE)
\begin{equation}\label{eq:HJBconstrained}
\begin{aligned}
V_t+r(i)xV_x+QV(x,t\cdot)(i)-\frac{(\mu(i)-r(i))^2V^2_x}{2\sigma^2(i)V_{xx}}  & =0 \\
                                                                    V(x,T,i)& = U(x).
\end{aligned}
\end{equation}
The goal is to find an explicit solution $V(x,t,i)$ to this system. To do so, we specify the utility function $U(x)$ further.

Consider the power utility $U(x)=\frac{x^\gamma}{\gamma}$, with $\gamma<1$ and $\gamma\neq 0$.
The linearity of the wealth $X_s$ and strategy $\theta_s$ in (\ref{eq:wealth_dyn}),
the property $U(xy)=U(x)y^\gamma$ of the utility function,
and the Markov property,
suggest the following ansatz for the value function
%the initial wealth factors out of the problem in the sense that
\begin{equation} \label{eq:ansatz}
V(x,t,i)=U(x) g(i,T-t)
\end{equation}
where $g(i,0)=1$.
Substitution of this expression into the HJB equation (\ref{eq:HJBconstrained}) and setting $g\equiv g(i,T-t)$, yields
\[
\begin{aligned}
g_t-r(i)x\frac{U'(x)}{U(x)}g-Qg(\cdot,T-t)(i)+\frac{(\mu(i)-r(i))^2}{2\sigma^2(i)}\frac{[U'(x)]^2}{U(x)U^{''}(x)}g  & =0 \\
                                                g(i,0)& =1.
\end{aligned}
\]
%From the condition in (\ref{eq:averParameter}), it is not difficult to see that

It is easy to see that $-x\frac{U^{''}(x)}{U'(x)}=1-\gamma$ and $x\frac{U'(x)}{U(x)}=\gamma$.
This gives the following coupled PDE equation, which does not depend on $x$,
\[%begin{equation}\label{eq:coupledpde}
\begin{aligned}
g_t-Qg(\cdot,T-t)(i)-\gamma\left[\frac{(\mu(i)-r(i))^2}{2(1-\gamma)\sigma^2(i)}+r(i)\right]g  & =0 \\
                                                g(i,0)& =1.
\end{aligned}
\]%end{equation}

This is the regime-switching version of a classical Cauchy problem (see \cite{Baran}), and the stochastic representation of the solution is given by
{\small
\begin{equation} \label{eq:F-Krep}
g(i,T-t)=\mathbb{E}\left[\exp\left\{\int_0^{T-t} \gamma\left[\frac{(\mu(Y_u)-r(Y_u))^2}{2(1-\gamma)\sigma^2(Y_u)}+ r(Y_u)\right]du \right\} \mid Y_0=i \right].
\end{equation}
}

In Section \ref{sec:Laplace}, we shall show the general method to compute the value function using Laplace transforms.

%%%%%%%%%%%%%%%%%%%%%%%%%%%%%%%%%%%%%%%%%%%%%%%%%%%%%%%%%%%%%%%%%%%%%%%%%%%%%%%%%%%%%%%%%%%%%%%%%%%%%%%%%%%%%%%%%%%%%%%%%%%%%%%
%%%%%%%%%%%%%%%%%%%%%%%%%%%%%%%%%%%%%%%%%%%%%%%%%%%%%%%%%%%%%%%%%%%%%%%%%%%%%%%%%%%%%%%%%%%%%%%%%%%%%%%%%%%%%%%%%%%%%%%%%%%%%%%
%%%%%%%%%%%%%%%%%%%%%%%%%%%%%%%%%%%%%%%%%%%%%%%%%%%%%%%%%%%%%%%%%%%%%%%%%%%%%%%%%%%%%%%%%%%%%%%%%%%%%%%%%%%%%%%%%%%%%%%%%%%%%%%

\section{Verification} \label{sec:ver}

%The constraint $X_t\geq 0$ on the wealth process and the fact that the agent is not able to trade when the wealth reaches $0$ imposes the boundary condition $V(0,i,t)=U(0)$.
Typically, a viscosity solution is defined on an open subset of the state space, in this case in $(0,\infty)$, and such solution is not a priori defined on the actual state space $[0,\infty)$ (recall $X_s\geq 0$).
To incorporate state constraints (such as that at the boundary level $x=0$), Soner \cite{Soner} %Add those who cited him subsequently
introduced the concept of {\it constrained} viscosity solution and the idea has been adapted in \cite{Duffie&Zarip}, \cite{Duffie_etal}, and \cite{Zarip94}
for second-order operators.

In this section we first show the existence of a constrained viscosity solution.
Then state that $V$ is the unique constrained viscosity solution in the class of concave functions in $x$ with the terminal condition $V(x,i,T)=U(x)$
and that $V$ is sufficiently smooth.
Finally, we verify that the candidate function constructed in the previous section coincides with the value function.

Let $\mathcal{O}=(0,\infty)\times [0,T)$, so that the closure $\bar{\mathcal{O}}=[0,\infty)\times [0,T]$.
Consider the function
\[
H: \bar{\mathcal{O}}\times \mathcal{M}\times \mathbb{R}\times \mathbb{R}\mapsto \mathbb{R}
\]
defined by
\[     % p--> v_x,  A-->v_{xx}
%F(x,t,i,v_t,v_x,v_{xx}) := \min_{\theta\in\mathbb{R}}\left\{ -\frac{1}{2}\theta^2 \sigma^2(i)v_{xx}-\theta(\mu(i)-r(i))v_x\right  \} - r(i)x v_x
H(x,t,i,p,A) := \min_{\theta\in\mathbb{R}}\left\{ -\frac{1}{2}\theta^2 \sigma^2(i)A-\theta(\mu(i)-r(i))p\right  \} - r(i)x p
\]
%\[
%\begin{split}
%F(x,t,i,v_t,v_x,v_{xx}) & := \min_{\theta\in\mathbb{R}}\left\{ -\frac{1}{2}\theta^2 \sigma^2(i)v_{xx}-\theta(\mu(i)-r(i))v_x\right  \} - r(i)x v_x   \\
%& \hspace{1cm}-v_t-Qv(x,t,\cdot)(i).
%\end{split}
%\]
For each fixed $i\in \mathcal{M}$, the function $H$ is continuous. %See for instance Lemma 11 in Bian et al.
Moreover, $H$ satisfies the property of degenerate ellipticity, namely
\[
H(x,t,i,p,A+B)\leq H(x,t,i,p,A),\qquad \mbox{if $B\geq 0$}.
\]
Now consider the function
\[
F(x,t,i,v,v_t,v_x,v_{xx}) := H(x,t,i,v_x,v_{xx}) -v_t-Qv(x,t,\cdot)(i).
\]
Formally, the HJB equation associated with the value function in (\ref{OriginalProblem}) can be written as
\begin{equation} \label{eq:viscosityHJB}
F(x,t,i,v,v_t,v_x,v_{xx})=0. %\qquad  (x,t)\in \mathcal{O}, i\in \mathcal{M}.
\end{equation}
%along with an appropriate boundary condition at $(0,t,i)$.
We now state the definition of a constrained viscosity solution of (\ref{eq:viscosityHJB}). %We follow Duffie and Zariphopoulou closely.
We follow \cite{Duffie&Zarip} and \cite{Duffie_etal} closely and give the natural modification of the definition of constrained viscosity solution in our setting.

\begin{definition} \label{def:viscosity}
A function $v:\bar{\mathcal{O}}\times \mathcal{M}\mapsto \mathbb{R}$ is a {\it viscosity solution} of (\ref{eq:viscosityHJB}) in $\mathcal{O}$
if for each $i\in \mathcal{M}$,
$v(\cdot,\cdot,i)$ is continuous and the following holds:
\begin{itemize}
\item[(i)] $v(\cdot,\cdot,i)$ is a {\it viscosity supersolution of (\ref{eq:viscosityHJB}) in $\mathcal{O}$}, that is, if for any test function $\phi\in C^2(\bar{\mathcal{O}})$ %([0,\infty)\times [0,T])$
and any local minimum $(x_0,t_0)\in \mathcal{O}$ of $v-\phi$ it follows that
\[
F(x_0,t_0,i,v(x_0,t_0,i),\phi_t(x_0,t_0),\phi_x(x_0,t_0),\phi_{xx}(x_0,t_0))\geq  0.
\]
\item[(ii)] $v(\cdot,\cdot,i)$ is a {\it viscosity subsolution of (\ref{eq:viscosityHJB}) in $\mathcal{O}$}, that is, if for any test function $\phi\in C^2(\bar{\mathcal{O}})$ %([0,\infty)\times [0,T])$
and any local maximum $(x_0,t_0)\in\mathcal{O}$ of $v-\phi$ it follows that
\[
F(x_0,t_0,i,v(x_0,t_0,i),\phi_t(x_0,t_0),\phi_x(x_0,t_0),\phi_{xx}(x_0,t_0))\leq 0.
\]
\end{itemize}
\end{definition}

\begin{definition}
A function $v:\bar{\mathcal{O}}\times \mathcal{M}\mapsto \mathbb{R}$ is a {\it constrained viscosity solution} of (\ref{eq:viscosityHJB}) on $\bar{\mathcal{O}}\times \mathcal{M}$
if for each $i$, $v(\cdot,\cdot,i)$ is a viscosity supersolution (resp. subsolution) of (\ref{eq:viscosityHJB}) in $\mathcal{O}$ (resp. on $\bar{\mathcal{O}}$).
\end{definition}

Throughout the section, we shall use $\mathbb{E}_{x,t,i}[\cdot]$ to denote the conditional expectation $\mathbb{E}[\,\cdot\mid X_{t}=x,Y_{t}=i]$.

\subsection{Analytical properties of the value function}

In this subsection we present some properties of the value function which are necessary to carry over with the existence and uniqueness of constrained viscosity solutions in the appropriate class.

\begin{proposition}
The value function $V$ satisfies that $|V(x,t,i)|\leq O(x^\gamma)$.
\end{proposition}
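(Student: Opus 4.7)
The plan is to reduce the growth estimate to a boundedness statement via the natural scaling symmetry of the problem, and then handle the two sign regimes of $\gamma$ separately. First I would exploit the fact that the wealth SDE \eqref{eq:wealth_dyn} is linear in both $X$ and $\theta$: for $x>0$, pathwise uniqueness yields $X^{x,\theta}_s = x\, X^{1,\theta/x}_s$, and the map $\theta \mapsto \theta/x$ is a bijection between $\mathcal{A}(x,t,i)$ and $\mathcal{A}(1,t,i)$ (it preserves feedback form, integrability, and non-negativity). Combined with the homogeneity $U(xy)=x^\gamma U(y)$, this yields $V(x,t,i) = x^\gamma V(1,t,i)$, so the proposition reduces to showing $|V(1,t,i)|$ is bounded.

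For $\gamma<0$, the utility $U$ is negative, so $V(1,t,i)\leq 0$ trivially. For a lower bound, testing against $\theta \equiv 0$ gives $X_T = \exp(\int_t^T r(Y_s)\,ds)$, whence
$$V(1,t,i) \;\geq\; \frac{1}{\gamma}\,\mathbb{E}\!\left[\exp\!\left(\gamma \int_t^T r(Y_s)\,ds\right)\right] \;\geq\; \frac{1}{\gamma},$$
using that $\gamma \int r \leq 0$ forces the exponential to be at most $1$, and division by $\gamma<0$ flips the inequality. Hence $|V(1,t,i)|\leq 1/|\gamma|$, and multiplying by $x^\gamma$ closes this case.

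For $\gamma \in (0,1)$ the lower bound $V \geq 0$ is immediate from $\theta \equiv 0$, so only an upper bound is needed. I would use the candidate $\phi(x,t,i)=U(x)\,g(i,T-t)$ constructed in Section \ref{Construction}, with $g$ given by \eqref{eq:F-Krep}. Since $r,\mu,\sigma$ are bounded on the finite state space $\mathcal{M}$, the function $g$ is a smooth, bounded solution of the coupled linear ODE system derived there, so $\phi$ is $C^{2,1}$ in $(x,t)$ and satisfies $\sup_\theta \mathbb{L}^\theta \phi =0$ pointwise. Applying It\^o's formula to $\phi(X_s,s,Y_s)$ for an arbitrary admissible $\theta$, together with a localization to handle the Brownian and jump stochastic integrals (using $\int_t^T\theta_s^2\,ds \in L^1$ by \eqref{eq:admissibility_cond}) and Fatou's lemma (legitimate because $\phi\geq 0$), yields
$$\mathbb{E}_{x,t,i}[U(X_T)] \;\leq\; \phi(x,t,i) \;=\; U(x)\,g(i,T-t),$$
and taking the supremum over $\theta$ gives $V(x,t,i) \leq g(i,T-t)\,U(x) \leq C\, x^\gamma$.

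The main obstacle is the supermartingale inequality for $\gamma \in (0,1)$: the admissibility condition only provides $L^1$ control on $\int \theta_s^2\,ds$, not the stronger integrability needed to bypass localization. Truncating along $\tau_n = \inf\{s : X_s + \int_t^s \theta_u^2\,du > n\} \wedge T$ and passing to the limit via Fatou applied to the non-negative $\phi$ is the standard device that closes this gap without requiring stronger a priori estimates on $\theta$.
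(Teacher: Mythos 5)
Your argument is correct, but it takes a genuinely different route from the paper. The paper's proof is a one-line soft argument: admissible controls satisfy the linear growth condition \eqref{eq:coeff_cond}, so standard moment estimates for regime-switching diffusions (the bound $N$ in Appendix A of \cite{Jacka&Ocejo}) give $\mathbb{E}_{x,t,i}[|X_T|^p]\leq N(1+x^p)$, and combining this with $U(x)=x^\gamma/\gamma$ (via Jensen/H\"older for $\gamma\in(0,1)$, and the trivial bound $V\geq J(\cdot\,;0)$ for $\gamma<0$) yields $|V|\leq O(x^\gamma)$. You instead exploit the exact homogeneity $V(x,t,i)=x^\gamma V(1,t,i)$ and then bound $V(1,t,i)$ by explicit comparison: testing $\theta\equiv 0$ for the easy side, and running a classical supermartingale/verification argument against the smooth supersolution $\phi=U(x)g(i,T-t)$ for the upper bound when $\gamma\in(0,1)$. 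Your route buys a sharper statement (the exact constant $g(i,T-t)$, and indeed the identity $V(x,t,i)=x^\gamma V(1,t,i)$, which the paper only obtains later via the ansatz and verification), at the cost of front-loading half of the verification theorem into this proposition; there is no circularity, since $\phi$ is a classical solution of $\sup_\theta\mathbb{L}^\theta\phi=0$ by the construction in Section \ref{Construction} and your upper bound only uses the supersolution inequality, but it is considerably heavier machinery than the paper needs at this point. Two small points to tidy up: (i) in the scaling step the rescaled feedback map $y\mapsto\bar\theta(xy,t,i)/x$ has growth/Lipschitz constants depending on $x$, which is harmless since \eqref{eq:coeff_cond} only asks for \emph{some} constants, but worth saying; (ii) your localization $\tau_n$ controls $X$ from above and $\int\theta^2$, but not the approach to $x=0$, where $\phi_{xx}=(\gamma-1)x^{\gamma-2}g$ blows up and It\^o's formula is not directly applicable — you should also stop at $\inf\{s:X_s<1/n\}$ and use $\phi\geq 0$ together with $\phi(0+,\cdot,\cdot)=0$ when passing to the limit.
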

\begin{proof}
Given that $U(x)=\frac{x^\gamma}{\gamma}$, and the admissible controls satisfy a linear growth condition, the result follows from estimates of the moments of regime-switching diffusions (see the bound $N$ in Appendix A in \cite{Jacka&Ocejo}).
\end{proof}

\begin{proposition} \label{prop:concave}
The function $V(\cdot,t,i)$ is concave and non-decreasing on $[0,\infty)$, for each fixed $t\in[0,T),i\in \mathcal{M}$.
\end{proposition}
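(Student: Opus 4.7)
The plan is to prove monotonicity and concavity separately, both by direct comparison arguments using the linearity of the wealth equation \eqref{eq:wealth_dyn} in the pair $(X,\theta)$ together with monotonicity and concavity of $U$.

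For the non-decreasing property, fix $t\in[0,T)$, $i\in\mathcal{M}$, and $0\le x_1\le x_2$. Given an arbitrary admissible $\theta\in\mathcal{A}(x_1,t,i)$ producing wealth $X^1$, I would consider the strategy that uses the same cash allocation $\theta_s$ to the risky asset but starts with initial wealth $x_2$, keeping the extra $x_2-x_1$ invested in the bond. By linearity of \eqref{eq:wealth_dyn}, the resulting wealth process is $X_s^2 = X_s^1 + (x_2-x_1)\exp\bigl(\int_t^s r(Y_u)\,du\bigr)\geq X_s^1\geq 0$, so the strategy is admissible for $(x_2,t,i)$ (the feedback representation and growth/Lipschitz bounds are inherited). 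Since $U$ is non-decreasing, $\mathbb{E}_{x_2,t,i}[U(X_T^2)]\ge \mathbb{E}_{x_1,t,i}[U(X_T^1)]$. Taking the supremum over $\theta$ yields $V(x_2,t,i)\ge V(x_1,t,i)$.

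For concavity, fix $\lambda\in(0,1)$, $x_1,x_2\ge 0$, and set $x=\lambda x_1+(1-\lambda)x_2$. I would pick $\varepsilon>0$ and $\varepsilon$-optimal admissible $\theta^j\in\mathcal{A}(x_j,t,i)$ with corresponding wealth $X^j$. Define $\theta_s:=\lambda\theta_s^1+(1-\lambda)\theta_s^2$ and $X_s:=\lambda X_s^1+(1-\lambda)X_s^2$. By linearity of \eqref{eq:wealth_dyn} in $(X,\theta)$, the pair $(X,\theta)$ again solves \eqref{eq:wealth_dyn} with $X_t=x$, the state constraint $X_s\ge 0$ is preserved as a convex combination of non-negative processes, the integrability condition \eqref{eq:admissibility_cond} follows from Jensen's inequality applied to $\theta_s^2$, and the linear growth / Lipschitz bounds \eqref{eq:coeff_cond} are preserved under convex combination. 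Concavity of $U$ gives
\[
U(X_T)=U\bigl(\lambda X_T^1+(1-\lambda)X_T^2\bigr)\ge \lambda U(X_T^1)+(1-\lambda)U(X_T^2).
\]
Taking $\mathbb{E}_{x,t,i}[\cdot]$, then $\sup$ over $\theta^j$, and finally letting $\varepsilon\downarrow 0$ yields $V(x,t,i)\ge \lambda V(x_1,t,i)+(1-\lambda)V(x_2,t,i)$.

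The main obstacle is the fact that $\mathcal{A}(x,t,i)$ is defined through \emph{feedback} controls $\bar\theta(\cdot,\cdot,i)$, so one must verify that $\theta=\lambda\theta^1+(1-\lambda)\theta^2$ can be represented in feedback form on the combined wealth $X=\lambda X^1+(1-\lambda)X^2$. I would handle this by either (i) observing that the argument goes through once $\mathcal{A}(x,t,i)$ is (equivalently) enlarged to progressively measurable controls satisfying \eqref{eq:admissibility_cond} and the state constraint, which does not change the value $V$ by standard dynamic programming, or (ii) noting that in the case $x_1=x_2$ (used for concavity via $V(x,t,i)\ge \tfrac12 V(x-h,t,i)+\tfrac12 V(x+h,t,i)$ at Lebesgue points) the two feedback maps, being Lipschitz in $x$, can be glued into a single feedback map that reproduces the convex combination of the trajectories. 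In either framing, the crux is the linearity of the controlled SDE plus concavity of $U$; everything else is bookkeeping about admissibility.
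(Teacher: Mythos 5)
Your proposal is correct and follows essentially the same route as the paper: concavity via $\varepsilon$-optimal controls, linearity of the controlled SDE \eqref{eq:wealth_dyn} in $(X,\theta)$, and concavity of $U$; monotonicity via the observation that a larger initial endowment dominates pathwise, where you write the explicit decomposition $X^2_s=X^1_s+(x_2-x_1)\exp\bigl(\int_t^s r(Y_u)\,du\bigr)$ while the paper invokes a path comparison theorem to conclude $\mathcal{A}(x_1,t,i)\subset\mathcal{A}(x_2,t,i)$. The one substantive addition is that you explicitly flag the feedback-representability of $\lambda\theta^1+(1-\lambda)\theta^2$, a point the paper passes over silently; your fix (i), enlarging to progressively measurable controls without changing $V$, is the right way to close it, whereas your fix (ii) is not needed and is somewhat muddled since midpoint concavity still involves two distinct initial points.
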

\begin{proof}
Let $x_1,x_2\geq 0$, $t\in [0,T)$, $i\in \mathcal{M}$.
Due to the linear dependence of the wealth dynamics (\ref{eq:wealth_dyn}) on the control $\theta$ and initial condition $x$, it follows that
for any $\theta_1\in\mathcal{A}(x_1,t,i)$ and $\theta_2\in\mathcal{A}(x_2,t,i)$ and fixed $\lambda\in(0,1)$
\[
\bar{\theta}:=\lambda \theta_1+(1-\lambda)\theta_2 \in \mathcal{A}(\lambda x_1+(1-\lambda)x_2,t,i).
\]
%Denote by $X_s^\theta$ the wealth dynamics under the control $\theta$. Then
For $\epsilon>0$, suppose that $\theta_1,\theta_2$ are $\epsilon-$optimal controls for $V(x_1,t,i)$ and $V(x_2,t,i)$, respectively.
Using the concavity of the utility function $U$ we obtain
\[
\begin{split}
V(\lambda x_1+(1-\lambda)x_2,t,i)   & \geq J(\lambda x_1+(1-\lambda)x_2,t,i;\bar{\theta})\\
                                    & \geq  \lambda\,J(x_1,t,i;\theta_1)+(1-\lambda)\,J(x_2,t,i;\theta_2)\\
                                    & \geq  \lambda\,V(x_1,t,i)+(1-\lambda)\,V(x_2,t,i)-\epsilon
\end{split}
\]
which proves the concavity of $V$ in the parameter $x$ since $\epsilon>0$ can be made arbitrarily small.

Now suppose that $x_1\leq x_2$.
%and $X^1,X^2$ solve (\ref{eq:wealth_dyn}) with $X^1_t=x_1$ and $X^2_t=x_2$, respectively.
Using standard path comparison theorems (see Section IX.3 \cite{Rev-Yor}) of solutions of stochastic differential equations,
we have that $X^{t,x_1}_s\leq X^{t,x_2}_s$ for all $s\geq t$ a.s. where $X^{t,x_1}_t=x_1$ and $X^{t,x_2}_t=x_2$, respectively. This yields
$\mathcal{A}(x_1,t,i)\subset \mathcal{A}(x_2,t,i)$, which implies that $V$ is non-decreasing as a function of $x$.
\end{proof}

Proposition \ref{prop:concave} along with the continuity of $V(\cdot,t,i)$ at $x=0$ (c.f. \cite{Karatzas_etal}) yield that $V(\cdot,t,i)$ is Lipschitz continuous in $[0,\infty)$.

\pagebreak
\begin{proposition}  \label{prop:t_cont}
The function $V(x,\cdot,i)$ is $1/2-$H\"older continuous in $[0,T]$ uniformly over a neighborhood of $x$, for each fixed $x\in[0,\infty), i\in \mathcal{M}$.
\end{proposition}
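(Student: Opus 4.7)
\medskip
\noindent\textbf{Proof plan.}
The plan is to combine the dynamic programming principle with the local Lipschitz continuity of $V(\cdot,t,i)$ noted after Proposition \ref{prop:concave} and standard moment estimates for the controlled SDE (\ref{eq:wealth_dyn}), so as to transfer time regularity to wealth-increment and regime-jump estimates that each scale no worse than $(t_2-t_1)^{1/2}$.

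Fix $0\leq t_1<t_2\leq T$, $i\in\mathcal{M}$, and a compact neighborhood $N$ of $x$ in $(0,\infty)$. By the dynamic programming principle,
\[
V(x,t_1,i) \;=\; \sup_{\theta\in\mathcal{A}(x,t_1,i)} \mathbb{E}_{x,t_1,i}\bigl[V(X_{t_2}^\theta,t_2,Y_{t_2})\bigr].
\]
Subtracting $V(x,t_2,i)$ and inserting $\pm V(x,t_2,Y_{t_2})$ inside the expectation decomposes the difference into a \emph{wealth term} $\mathbb{E}[V(X_{t_2}^\theta,t_2,Y_{t_2})-V(x,t_2,Y_{t_2})]$ and a \emph{regime term} $\mathbb{E}[V(x,t_2,Y_{t_2})-V(x,t_2,i)]$. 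Since $\mathcal{M}$ is finite and $\mathbb{P}(Y_{t_2}\neq i\mid Y_{t_1}=i)\leq q_i(t_2-t_1)$, the regime term is $O(t_2-t_1)$.

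For the wealth term, each $V(\cdot,t_2,j)$ is locally Lipschitz on $N$ with a constant $L$ taken uniform over the finite set $\mathcal{M}$. An $L^2$-estimate using It\^o's isometry in (\ref{eq:wealth_dyn}) together with (\ref{eq:coeff_cond}) and (\ref{eq:admissibility_cond}) yields $\mathbb{E}|X_{t_2}^\theta-x|^2\leq C(t_2-t_1)$, so Cauchy--Schwarz gives $\mathbb{E}|X_{t_2}^\theta-x|\leq C^{1/2}(t_2-t_1)^{1/2}$. On the event $\{X_{t_2}^\theta\notin N\}$, I would combine Chebyshev's inequality with the growth bound $|V|=O(x^\gamma)$ established at the start of Section \ref{sec:ver} to control the tail contribution by a further $O((t_2-t_1)^{1/2})$. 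A matching lower bound on $V(x,t_1,i)-V(x,t_2,i)$ would follow by inserting the admissible suboptimal strategy $\theta\equiv 0$ on $[t_1,t_2]$, for which $|X_{t_2}^0-x|\leq x\,(e^{R(t_2-t_1)}-1)=O(t_2-t_1)$ with $R:=\max_i r(i)$, followed by an $\varepsilon$-optimal continuation on $[t_2,T]$.

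The hard part will be making the $L^2$-estimate uniform over the supremum in the DPP, since (\ref{eq:coeff_cond}) allows strategy-dependent constants $K_1,K_2$. I would address this by restricting attention to $\varepsilon$-optimal feedback controls with uniformly bounded ratio $\theta_s/(1+X_s)$, consistent with the candidate (\ref{eq:candidatestrat}) evaluated on the ansatz (\ref{eq:ansatz}); this restriction is without loss because such controls realize $V$ up to arbitrarily small error. A Gr\"onwall argument applied to $\mathbb{E}[(X_s^\theta)^2]$ then yields a constant $C$ depending on $x$ but uniform as $x$ varies over $N$, and combining all estimates gives $|V(x,t_1,i)-V(x,t_2,i)|\leq C|t_2-t_1|^{1/2}$.
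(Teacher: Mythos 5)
Your proposal follows essentially the same route as the paper: the dynamic programming principle, the (local) Lipschitz continuity of $V(\cdot,t_2,\cdot)$ obtained after Proposition \ref{prop:concave}, and the moment estimate $\mathbb{E}_{x,t_1,i}|X_{t_2}-x|\leq C(t_2-t_1)^{1/2}$ for admissible controls. The paper is terser---it absorbs the regime-jump term, the excursion outside the neighborhood, and the uniformity of the linear-growth constants into a citation of Proposition 2.1 of \cite{Jacka&Ocejo}---so your extra care on those points (in particular on the uniformity of $K_1$ over the admissible class, which the paper implicitly assumes rather than deriving from the near-optimal feedback form) is a refinement of the same argument rather than a different one.
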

\begin{proof}
Let $0\leq t_1<t_2\leq T$. By the Dynamic Programming Principle,
\[
\begin{aligned}
|V(x,t_1,i)-V(x,t_2,i)| & = \left| \sup_{\theta\in \mathcal{A}(x,t_1,i)}\mathbb{E}_{x,t_1,i}[V(X_{t_2},t_2,Y_{t_2})]-V(x,t_2,i)\right| \\
                        & \leq \sup_{\theta\in \mathcal{A}(x,t_1,i)} \mathbb{E}_{x,t_1,i}[\,|V(X_{t_2},t_2,Y_{t_2})-V(x,t_2,i)|\,] \\
                        & \leq N'\sup_{\theta\in \mathcal{A}(x,t_1,i)} \mathbb{E}_{x,t_1,i}[\,|X_{t_2}-x|\,]
\end{aligned}
\]
for some constant $N'>0$, where the last inequality is due to the Lipschitz continuity of $V(\cdot,t_2,i)$.
Given that the admissible controls in $\mathcal{A}(x,t_1,i)$ satisfy a linear growth condition, it follows that
\[
|V(x,t_1,i)-V(x,t_2,i)|\leq N|t_1-t_2|^{1/2}
\]
where $N$ depends on $N',x,i,T$ and it is continuous as a function of $x$ (see the proof of Proposition 2.1 in \cite{Jacka&Ocejo}), which concludes the proof.
%{\color{red} This proof would not work if the controls were merely unbounded. However, we assume they have linear growth and then we can use estimates of SDEs as in my Paper 2}.
\end{proof}

\subsection{Existence of constrained viscosity solutions}

%\begin{definition}
%A function $v:\bar{\mathcal{O}}\times \mathcal{M}\mapsto \mathbb{R}$ is a constrained viscosity solution of (\ref{eq:viscosityHJB}) if for each $i\in \mathcal{M}$,
%$v(\cdot,\cdot,i)$ is continuous and the following holds:
%\begin{itemize}
%\item[(i)] $v(\cdot,\cdot,i)$ is a viscosity supersolution of (\ref{eq:viscosityHJB}) in $\mathcal{O}$, that is, if for any test function $\phi\in C^2(\bar{\mathcal{O}})$
%and any local minimum $(x_0,t_0)\in \mathcal{O}$ of $v-\phi$ it follows that
%\[
%F(x_0,t_0,i,v(x_0,t_0,i),\phi_t(x_0,t_0),\phi_x(x_0,t_0),\phi_{xx}(x_0,t_0))\geq  0.
%\]
%\item[(ii)] $v(\cdot,\cdot,i)$ is a viscosity subsolution of (\ref{eq:viscosityHJB}) on $\bar{\mathcal{O}}$, that is, if for any test function $\phi\in C^2(\bar{\mathcal{O}})$
%and any local maximum $(x_0,t_0)\in\bar{\mathcal{O}}$ of $v-\phi$ it follows that
%\[
%F(x_0,t_0,i,v(x_0,t_0,i),\phi_t(x_0,t_0),\phi_x(x_0,t_0),\phi_{xx}(x_0,t_0))\leq 0.
%\]
%\end{itemize}
%\end{definition}

\begin{proposition} \label{prop:existence}
The value function $V:\bar{\mathcal{O}}\times \mathcal{M}\mapsto \mathbb{R}$ is a constrained viscosity solution of (\ref{eq:viscosityHJB}) on $\bar{\mathcal{O}}\times \mathcal{M}$.  %with terminal condition $V(x,T,i)=U(x)$.
\end{proposition}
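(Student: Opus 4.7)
The plan is to derive both viscosity inequalities directly from the Dynamic Programming Principle (DPP), which also underlies Proposition \ref{prop:t_cont}. The workhorse is an It\^o--Dynkin expansion of a smooth test function $\phi$ along the controlled regime-switching pair $(X_s, Y_s)$; to make everything integrable I would localize by a stopping time of the form $\tau_h = (t_0 + h) \wedge \inf\{s \ge t_0 : |X_s - x_0| \ge \rho\}$, which simultaneously keeps the state inside a small neighborhood of $(x_0, t_0)$, kills the expected Brownian term, and kills the compensated-jump martingale associated with $Y$.

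For the supersolution property in $\mathcal{O}$, I would fix $i \in \mathcal{M}$, $(x_0, t_0) \in \mathcal{O}$, and a test function $\phi \in C^2(\bar{\mathcal{O}})$ with $V - \phi$ attaining a local minimum (normalized to zero) at $(x_0, t_0)$. For an arbitrary constant $\theta \in \mathbb{R}$, the strategy $\theta_s \equiv \theta$ on $[t_0, \tau_h]$ is admissible, and DPP gives $V(x_0, t_0, i) \ge \mathbb{E}_{x_0, t_0, i}[\phi(X_{\tau_h}, \tau_h, Y_{\tau_h})]$. Expanding $\phi$ by It\^o--Dynkin (producing the usual drift, diffusion, and $Q\phi$ terms), dividing by $h$, and sending $h \downarrow 0$ yields $\mathbb{L}^\theta \phi(x_0, t_0, i) \le 0$; maximizing over $\theta$ gives $F(x_0, t_0, i, V, \phi_t, \phi_x, \phi_{xx}) \ge 0$.

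For the subsolution property on $\bar{\mathcal{O}}$, which must also hold at the boundary $x_0 = 0$, I would argue by contradiction. Suppose $(x_0, t_0) \in \bar{\mathcal{O}}$ is a local maximum of $V - \phi$ (again normalized to zero) with $F(x_0, t_0, i, V, \phi_t, \phi_x, \phi_{xx}) > \delta > 0$; continuity in $(x, t)$ propagates this strict inequality to a small neighborhood. Using DPP with $\varepsilon$-optimal strategies $\theta^\varepsilon \in \mathcal{A}(x_0, t_0, i)$ and the same It\^o--Dynkin expansion on $[t_0, \tau_h]$ produces a bound of the form $-\varepsilon h \le -\delta\,\mathbb{E}_{x_0, t_0, i}[\tau_h - t_0]$, which, after first fixing $\varepsilon < \delta / 2$ and then sending $h \downarrow 0$, yields the desired contradiction. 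Note that at $x_0 = 0$ the same test-function mechanism still works because $\phi$ is defined on $\bar{\mathcal{O}}$ and the process $X$ can only stay or move into $\mathcal{O}$, so the stopping rule and DPP are applied in the same way.

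The main obstacle is that the $\varepsilon$-optimal controls are a priori unbounded, so one cannot blindly interchange expectation and Lebesgue integration in the It\^o expansion. This is precisely what the localization by $\tau_h$ is designed for: on $[t_0, \tau_h]$ the state $X$ is uniformly bounded, and then the linear-growth condition (\ref{eq:coeff_cond}) together with the square-integrability requirement (\ref{eq:admissibility_cond}) controls $\int_{t_0}^{\tau_h} (\theta^\varepsilon_s)^2 \sigma^2(Y_s) \, ds$ uniformly in $\varepsilon$, which is enough to push the limit through. A minor secondary point is the appearance of $Q\phi$ in the generator, which follows from writing $\phi(X_s, s, Y_s) - \phi(X_s, s, Y_{s-})$ via the martingale representation of the chain $Y$ against its compensated jump measure.
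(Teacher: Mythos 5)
Your supersolution step follows the paper's argument (constant controls, localization, Dynkin, divide by $h$), but your subsolution step contains a genuine gap precisely at the point you try to wave away. You argue by contradiction with $\varepsilon$-optimal controls taken from the \emph{full} admissible class $\mathcal{A}(x_0,t_0,i)$ and claim that localization by $\tau_h$ plus (\ref{eq:coeff_cond}) and (\ref{eq:admissibility_cond}) gives the needed bounds ``uniformly in $\varepsilon$.'' It does not: the linear-growth constant $K_1$ in (\ref{eq:coeff_cond}) is attached to each individual feedback control, not to the class, so as $h\downarrow 0$ the $\tfrac{\varepsilon}{2}h$-optimal controls may have local bounds on $[t_0,\tau_h]$ that blow up. Two things then break. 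First, the final step of the contradiction requires $\mathbb{E}_{x_0,t_0,i}[\tau_h-t_0]/h\rightarrow 1$, i.e.\ that the exit from the neighborhood before $t_0+h$ is asymptotically negligible; the exit-time estimate depends on the size of the (controlled) drift and diffusion coefficients, and without a uniform bound on $\theta$ over the family of nearly optimal controls this limit need not hold. Second, (\ref{eq:admissibility_cond}) gives square-integrability control by control, not a uniform moment bound across the family. This is exactly the ``lack of compactness of the control space'' obstruction; the paper resolves it by truncating to $\mathcal{A}^n=\{\theta\in\mathcal{A}:|\theta_s|\le n\}$, proving that each $V^n$ is a viscosity subsolution of the truncated equation $F_n=0$ on $\bar{\mathcal{O}}$ (where your contradiction argument is legitimate, since $|\theta|\le n$ makes Dynkin and the exit-time asymptotics uniform), and then passing to the limit via the stability theorem for viscosity solutions once $V^n\rightarrow V$ locally uniformly (monotonicity, Fatou's lemma with $\theta_\epsilon\wedge n$, and Dini). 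Your proof is missing this approximation-and-stability layer, and without it the subsolution half does not close.

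A second, smaller but still substantive issue is your treatment of the regime-coupling term. You say the ``$Q\phi$'' term arises from expanding $\phi(X_s,s,Y_s)-\phi(X_s,s,Y_{s-})$ against the compensated jump measure of $Y$, but the test function $\phi\in C^2(\bar{\mathcal{O}})$ does not depend on the regime, so that computation literally produces zero. The operator $F$ in (\ref{eq:viscosityHJB}) couples the equations through $Qv(x,t,\cdot)(i)$ evaluated at the \emph{solution} $v=V$, not at the test function. The paper handles this by introducing the hybrid function $\Phi(x,t,j)$ equal to $\phi(x,t)$ for $j=i$ and to $V(x,t,j)$ for $j\neq i$, and by including $\inf\{s\ge t_0: Y_s\neq Y_{t_0}\}$ in the stopping time so that It\^o's formula is never applied to $V(\cdot,\cdot,j)$, which is only continuous. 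You should incorporate both devices; as written, your expansion either drops the coupling term or applies It\^o to a non-smooth function.
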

\begin{proof}
Fix $i\in \mathcal{M}$.

\underline{Step 1.} We first show that $V(\cdot,\cdot,i)$ is a viscosity supersolution of (\ref{eq:viscosityHJB}) in $\mathcal{O}$.
Let $\phi\in C^2(\bar{\mathcal{O}})$ and $(x_0,t_0)\in\mathcal{O}$ be a minimum of $V-\phi$ in a neighborhood $N(x_0,t_0)\subset \mathcal{O}$.
We want to prove that
\begin{equation}\label{eq:supersolHJB}
\begin{split}
0 &\geq \max_{\theta\in\mathbb{R}}\left\{ \frac{1}{2}\theta^2 \sigma^2(i)\phi_{xx}(x_0,t_0)+\theta(\mu(i)-r(i))\phi_x(x_0,t_0)\right  \} \\
& \hspace{1cm}+ r(i)x_0 \phi_x(x_0,t_0) +\phi_t(x_0,t_0)+QV(x_0,t_0,\cdot)(i).
\end{split}
\end{equation}

Let $\theta \in \mathcal{A}(x_0,t_0,i)$ be a constant control, that is $\theta_s\equiv \theta\in \mathbb{R}$ for all $s\geq t_0$,
and let $X$ be the unique solution of (\ref{eq:wealth_dyn}) using this $\theta$ which starts at $X_{t_0}=x_0$ and suppose that the Markov chain starts at $Y_{t_0}=i$.

Define the stopping time
\[
\tau:=\inf\{s\geq t_0:\,(X_s,s)\notin N(x_0,t_0)\}\wedge \inf\{s\geq t_0:\,Y_s\neq Y_{t_0}\}.
\]%Note that $\tau_n \rightarrow 0$ a.s.
We also define the function
\[
\Phi(x,t,j):=
\begin{cases}
\phi(x,t) & \mbox{if $j=i$}, \\
V(x,t,j)    & \mbox{if $j\neq i$}.
\end{cases}
\]

On the one hand, the dynamic programming principle implies that
\begin{equation} \label{eq:DDP}
V(x_0,t_0,i)\geq \mathbb{E}_{x_0,t_0,i}[V(X_{s\wedge \tau},s\wedge \tau, Y_{s\wedge \tau})].
\end{equation}

On the other hand, the generalized It\^o's formula for regime-switching diffusions (see e.g. \cite{Baran}) applied to $\Phi(X_s,s,Y_s)$ gives (we omit the dependence of $\Phi$ on the parameters for simplicity)
\[
\begin{split}
d\Phi(X_s,s,Y_s)   & =\Phi_t\,ds+\Phi_x[\theta(\mu(Y_s)-r(Y_s))+r(Y_s)X_s]ds + \frac{1}{2}\Phi_{xx}\,\theta^2 \sigma^2(Y_s)\,ds \\
                & \qquad + Q\Phi(X_s,s,\cdot)(i)ds +dM_s
\end{split}
\]
where $M_s=\int_{t_0}^s \Phi_x(X_u,u,Y_u)\,\sigma(Y_u)\theta \,dB_u$.
By assumption, $\phi_x$ is continuous and so $\Phi_x(X_u,u,Y_u)$ is bounded for $u\in [t_0,s\wedge \tau]$.
Moreover, $\sigma(Y_u)$ is also bounded for any $u$.
Then the process $\{M_{s\wedge \tau}\}_{t_0\leq s\leq T}$ is a square integrable martingale (see e.g. \cite{Rev-Yor}) %Corollary IV.1.25
and we obtain Dynkin's formula

{\small
\begin{equation} \label{eq:DynkinForm}
\begin{aligned}
& \mathbb{E}_{x_0,t_0,i}[\Phi(X_{s\wedge \tau},s\wedge \tau, Y_{s\wedge \tau})] = \mathbb{E}_{x_0,t_0,i}[\phi(X_{s\wedge \tau},s\wedge \tau)]=\phi(x_0,t_0) \\
&+\mathbb{E}_{x_0,t_0,i}\left[\int_{t_0}^{s\wedge \tau}\left\{ \frac{\theta^2 \sigma^2(i)}{2}\phi_{xx}+[\theta(\mu(i)-r(i))+r(i)X_u]\phi_x +\phi_t+QV(X_u,u,\cdot)(i)\right\}du\right]
\end{aligned}
\end{equation}
}
where we omitted the dependence of the derivatives of $\phi$ on $(X_u,u)$.

Combining (\ref{eq:DDP}) and (\ref{eq:DynkinForm}), together with the fact that $(x_0,t_0)$ is a minimum in $N(x_0,t_0)$, we get
{\small
\begin{equation}
0\geq \mathbb{E}_{x_0,t_0,i}\left[\int_{t_0}^{s\wedge \tau}\left\{ \frac{1}{2}\theta^2 \sigma^2(i)\phi_{xx}+[\theta(\mu(i)-r(i))+r(i)X_u]\phi_x +\phi_t+QV(X_u,u,\cdot)(i)\right\}du\right].
\end{equation}
}

%{\small\[
%0\geq \mathbb{E}_{x_0,t_0,i}\left[\int_{t_0}^{s\wedge \tau}\left\{ \frac{1}{2}\theta^2 \sigma^2(i)\phi_{xx}+[\theta(\mu(i)-r(i))+r(i)X_u]\phi_x +\phi_t+QV(X_u,u,\cdot)(i)\right\}du\right].
%\]}
Dividing by $s-t_0$ and using that $\phi_x, \phi_{xx},\phi_t$ and $V(\cdot,\cdot,j)$ for any $j$ are continuous at $(x_0,t_0)$, as well as the continuity of the paths of $X$,
upon letting $s\rightarrow t_0$ we obtain
\begin{equation}
0\geq \frac{1}{2}\theta^2 \sigma^2(i)\phi_{xx}(x_0,t_0)+[\theta(\mu(i)-r(i))+r(i)x_0]\phi_x(x_0,t_0) +\phi_t(x_0,t_0)+QV(x_0,t_0,\cdot)(i)
\end{equation}
%\[
%0\geq \frac{1}{2}\theta^2 \sigma^2(i)\phi_{xx}(x_0,t_0)+[\theta(\mu(i)-r(i))+r(i)x_0]\phi_x(x_0,t_0) +\phi_t(x_0,t_0)+QV(x_0,t_0,\cdot)(i)
%\]
%we used the mean value theorem for integrals
and this is true for any $\theta\in \mathbb{R}$, thus the claim in (\ref{eq:supersolHJB}) follows.

\smallskip
\underline{Step 2.} We now show that $V(\cdot,\cdot,i)$ is a viscosity subsolution of (\ref{eq:viscosityHJB}) on $\bar{\mathcal{O}}$.
Let $\phi\in C^2(\bar{\mathcal{O}})$ and $(x_0,t_0)\in\bar{\mathcal{O}}$ a local maximum of $V-\phi$.
We want to show that (\ref{eq:supersolHJB}) holds with the reversed inequality. However, due to the lack of compactness of the control space ($\theta\in\mathbb{R}$),
some technical difficulties arise. We use a common trick, the stability properties of viscosity solutions (c.f. \cite{Duffie&Zarip}), to approximate $V$ by a sequence of value functions $V^n$ with compact control space which are viscosity subsolutions of a modified HJB equation on $\bar{\mathcal{O}}$. Namely, define
\[
V^n(x,t,i):=\sup_{\theta\in\mathcal{A}^n(x,t,i)} J(x,t,i;\theta)
\]
where $\mathcal{A}^n(x,t,i)=\{\theta\in \mathcal{A}(x,t,i):\, |\theta_s|\leq n,\; a.s.\;\forall s\geq t\}$.
The need for a compact control space will be apparent below.

It is enough to show that $V^n$ is a viscosity subsolution of the modified HJB equation $F_n(x,t,i,v,v_t,v_x,v_{xx})=0$ on $\bar{\mathcal{O}}$,
where $F_n$ is as $F$ in (\ref{eq:viscosityHJB}) with $H$ replaced by
\[
H_n(x,t,i,p,A) := \min_{|\theta|\leq n}\left\{ -\frac{1}{2}\theta^2 \sigma^2(i)A-\theta(\mu(i)-r(i))p\right  \} - r(i)x p.
\]
By stability properties, we have that if $V^n(\cdot,\cdot,i) \rightarrow V(\cdot,\cdot,i)$ locally uniformly on $\bar{\mathcal{O}}$
then $V$ is a subsolution of (\ref{eq:viscosityHJB}) on $\bar{\mathcal{O}}$ (c.f. Theorem 4.1 in \cite{Barles}).

We proceed to prove that the modified value function $V^n$ is a viscosity subsolution of $F_n(x,t,i,v,v_t,v_x,v_{xx})=0$ on the closed domain $\bar{\mathcal{O}}$.
%Remark that if we were dealing with the open domain, this claim follows directly since
%As it turns, given that the control space is compact,
%$V^n$ is a viscosity solution of $F_n(x,t,i,v,v_t,v_x,v_{xx})=0$ (see e.g. Theorem 5 in \cite{Lions}), that is
%\[
%\begin{split}
%0 &= \max_{|\theta|\leq n}\left\{ \frac{1}{2}\theta^2 \sigma^2(i)V^n_{xx}(x,t)+\theta(\mu(i)-r(i))V^n_x(x,t)\right  \} \\
%& \hspace{1cm}+ r(i)x V^n_x(x,t) +V^n_t(x,t)+QV^n(x,t,\cdot)(i),
%\end{split}
%\]
%for $(x,t)\in \mathcal{O}$ in the sense of Definition \ref{def:viscosity}.
To this end, we need to show that
for $\phi\in C^2(\bar{\mathcal{O}})$ and $(x_0,t_0)\in\bar{\mathcal{O}}$ a local maximum of $V^n-\phi$,
\begin{equation}\label{eq:subsolHJB}
\begin{split}
0 &\leq \max_{|\theta|\leq n}\left\{ \frac{1}{2}\theta^2 \sigma^2(i)\phi_{xx}(x_0,t_0)+\theta(\mu(i)-r(i))\phi_x(x_0,t_0)\right  \} \\
& \hspace{1cm}+ r(i)x_0 \phi_x(x_0,t_0) +\phi_t(x_0,t_0)+QV^n(x_0,t_0,\cdot)(i).
\end{split}
\end{equation}
By contradiction, assume that (\ref{eq:subsolHJB}) is not true.
Then there is a test function $\phi\in C^2(\bar{\mathcal{O}})$ and a local maximum $(x_0,t_0)\in \bar{\mathcal{O}}$ of $V^n-\phi$ such that
the negative of the right-hand side in (\ref{eq:subsolHJB}) is strictly positive.
%$K>0$ with
%\[
%\begin{split}
%K\equiv &-\max_{|\theta|\leq n}\left\{ \frac{1}{2}\theta^2 \sigma^2(i)\phi_{xx}(x_0,t_0)+\theta(\mu(i)-r(i))\phi_x(x_0,t_0)\right  \} \\
%& \hspace{1cm}- r(i)x_0 \phi_x(x_0,t_0) -\phi_t(x_0,t_0)-QV^n(x_0,t_0,\cdot)(i).
%\end{split}
%\]
Using the continuity of $F_n$, %with respect to the joint arguments
given $\epsilon>0$ there exists a neighborhood $N(x_0,t_0)\subset \bar{\mathcal{O}}$ such that
\begin{equation} \label{eq:eps_subopt}
\begin{split}
\epsilon < &-\max_{|\theta|\leq n}\left\{ \frac{1}{2}\theta^2 \sigma^2(i)\phi_{xx}(x,t)+\theta(\mu(i)-r(i))\phi_x(x,t)\right  \} \\
& \hspace{1cm}- r(i)x \phi_x(x,t) -\phi_t(x,t)-QV^n(x,t,\cdot)(i).
\end{split}
\end{equation}
Without loss of generality, we may assume that $V^n(x_0,t_0,i)=\phi(x_0,t_0)$ which implies that $V^n(x,t,i)\leq \phi(x,t)$ in $N(x_0,t_0)$.

Below, we follow along an argument in \cite{Bianetal}.
Let $h>0$ be small enough so that $[t_0,t_0+h)\subset [0,T)$.
Let $\theta=(\theta_u)_{t_0\leq u\leq T}$ be an $\frac{\epsilon}{2}h-$optimal control in $\mathcal{A}^n(x_0,t_0,i)$, and
let $X$ be the unique solution of (\ref{eq:wealth_dyn}) with $X_{t_0}=x_0$ using this $\theta$. Also define the stopping time
\[
\tau:=(t_0+h)\wedge \inf\{s\geq t_0:\,(X_s,s)\notin N(x_0,t_0)\}\wedge \inf\{s\geq t_0:\,Y_s\neq Y_{t_0}\}.
\]
%so that $\tau>t_0$ a.s.
The Dynamic Programming Principle implies that
\[
V^n(x_0,t_0,i) -\frac{\epsilon}{2}h \leq \mathbb{E}_{x_0,t_0,i}[V^n(X_{\tau},\tau,Y_{\tau})]
\]
which combined with Dynkin's formula (which holds because the control $\theta$ is bounded) %and the fact that $Y_\tau=i$
yields
{\small
\[
\begin{split}
&\phi(x_0,t_0)-\frac{\epsilon}{2}h \leq \mathbb{E}_{x_0,t_0,i}[\phi(X_{\tau},\tau)] =\phi(x_0,t_0) \\
& +\mathbb{E}_{x_0,t_0,i}\left[\int_{t_0}^{\tau}\left\{ \frac{\theta_u^2 \sigma^2(i)}{2}\phi_{xx}+[\theta_u(\mu(i)-r(i))+r(i)X_u]\phi_x +\phi_t+QV^n(X_u,u,\cdot)(i)\right\}du\right].
\end{split}
\]}
Using the inequality in (\ref{eq:eps_subopt}), we imply that $-\frac{\epsilon}{2}h \leq  -\epsilon\,\mathbb{E}_{x_0,t_0,i}[\tau-t_0]$ and upon dividing by $h$ we obtain
\[
-\frac{\epsilon}{2} +\epsilon\,\frac{\mathbb{E}_{x_0,t_0,i}[\tau-t_0]}{h} \leq 0.
\]
Taking limit as $h\downarrow 0$, it can be seen that $\mathbb{E}_{x_0,t_0,i}[\tau-t_0]/h\rightarrow 1$ (see \cite{Bianetal}).
This in turn implies that $\epsilon\leq 0$ which is a contradiction.
Thus (\ref{eq:subsolHJB}) holds true and $V^n$ is a subsolution of $F_n(x,t,i,v,v_t,v_x,v_{xx})=0$ on $\bar{\mathcal{O}}$ as desired.

To conclude, observe that $V^n$ increases with $n$ and $V^n\leq V$.
On the other hand, for any $\epsilon$-optimal control $\theta_\epsilon \in \mathcal{A}(x,t,i)$, $\theta_\epsilon \wedge n \in \mathcal{A}^n(x,t,i)$ and using that the utility function $U$ is bounded from below, together with the linearity of the control $\theta$ in the dynamics of $X$,
Fatou's lemma implies that
$\liminf_{n\rightarrow \infty} J(x,t,i;\theta_\epsilon\wedge n) \geq J(x,t,i; \theta_\epsilon)$. These assertions yield
\[
V^n(x,t,i)\leq V(x,t,i) \leq J(x,t,i;\theta_\epsilon) +\epsilon \leq J(x,t,i;\theta_\epsilon\wedge n) +\epsilon  \leq V^n(x,t,i) + \epsilon.
\]
Therefore, $V^n(\cdot,\cdot,i)$ converges to $V(\cdot,\cdot,i)$ pointwise on $\bar{\mathcal{O}}$,
and given that $V(\cdot,\cdot,i)$ is continuous, the locally uniform convergence holds.
\end{proof}

%%%%%%%%%%%%%%%%%%%%%%%%%%%%%%%%%%%%%%%%%%%%%%%%%%%%%%%%%%%%%%%%%%%%%%%%%%%%%%%%%%%%%%%%%%%%%%%%%%%%%%%%%%%%%%%%%%%%%%%%%%%%%%%
%%%%%%%%%%%%%%%%%%%%%%%%%%%%%%%%%%%%%%%%%%%%%%%%%%%%%%%%%%%%%%%%%%%%%%%%%%%%%%%%%%%%%%%%%%%%%%%%%%%%%%%%%%%%%%%%%%%%%%%%%%%%%%%
%%%%%%%%%%%%%%%%%%%%%%%%%%%%%%%%%%%%%%%%%%%%%%%%%%%%%%%%%%%%%%%%%%%%%%%%%%%%%%%%%%%%%%%%%%%%%%%%%%%%%%%%%%%%%%%%%%%%%%%%%%%%%%%

\subsection{Uniqueness}

In the rest of this section, we assert that the value function is the unique constrained viscosity solution of (\ref{eq:viscosityHJB}) on $\bar{\mathcal{O}}$
in the class of concave functions of $x$ and satisfying the boundary condition $V(x,T,i)=U(x)$, $x\in[0,\infty)$.

Roughly, when dealing with (unconstrained) viscosity solutions in an open set, say $\mathcal{O}$,
the classical approach to the uniqueness result is based on the maximum principle, which examines the maximum of $v-w$ on $\mathcal{O}$ where
$v$ is a viscosity subsolution in $\mathcal{O}$ and $w$ is a viscosity supersolution in $\mathcal{O}$.
Indeed, if $v\leq w$ in $\partial{\mathcal{O}}$ then $\sup_{\mathcal{O}}(v-w)\leq 0$ (see e.g. \cite{Fleming&Soner}).
Thus, the value function is the unique viscosity solution in $\mathcal{O}$ with specified boundary conditions on $\partial{\mathcal{O}}$.
The main difficulties when trying to apply this classical comparison result are twofold:
(i) the control set is unbounded, and (ii) we do not know a priori the behavior of the value function on the entire boundary of the domain.
%Indeed, by the definition of the maximization problem, $V(0,t,i)\geq U(0)$ for every $t\in[0,T)$
%(this can be seen by choosing the suboptimal control $\theta_s=0$ for all $s\geq t$, so that wealth is absorbed at zero).
%Although we intuitively expect $V(0,t,i)=U(0)$, we only know that the the value function satisfies the subsolution property at $(0,t,i)$.

%{\color{red} Follow up the following remark after seeing the proof of the subsolution property}
%The condition on $v$ being a subsolution on the closed domain $\bar{\mathcal{O}}$ captures a boundary condition for $x=0$ in the viscosity sense
%, that is, either $V(0,t,i)= U(0)$ or $V$ satisfies the subsolution property at $(0,t)$.
%Intuitively, we expect $V(0,t,i)=U(0)$, but if $V(0,t_0,i)>U(0)$ for some $t_0$, and $(0,t_0)$ is a local maximum of $V-\phi$ for some test function $\phi\in C^2(\bar{\mathcal{O}})$, the function $V$ will be seen to satisfy the subsolution property at $(0,t_0)$, that is
%\[
%F(0,t_0,i,V(0,t_0,i),\phi_t(0,t_0),\phi_x(0,t_0),\phi_{xx}(0,t_0))\leq 0.
%\]
%{\color{red} Note: we don't need to impose $V$ to be a supersolution on $\bar{\mathcal{O}}$ because it is automatically implied by $V(0,t,i)\geq U(0)$. Check this.}

We state comparison and smoothness results, Proposition \ref{prop:uniqueness} and Proposition \ref{prop:smooth} respectively,
without proof and refer to \cite{Duffie&Zarip} and \cite{Zarip94} for the technical details.
The arguments in the proofs of Theorem 4.2 in \cite{Duffie&Zarip} and Theorem 4.1 and 5.1 in \cite{Zarip94} can be adapted to our context after careful consideration.
%One may observe that the HJB equation in their context is similar in structure to
%following result mirrors those of Theorem 4.2 in \cite{Duffie&Zarip}, which is based on Theorem 4.1 in \cite{Zarip94}.
The main structural difference of the HJB equations lies first on the extra terms accounting for the jumps of the Markov chain
(which can be easily handled because $V$ is bounded as a function of $i$ and continuous for each $i$),
and second, in the presence of the time parameter which can be incorporated in their context as an additional state parameter.
They study an infinite horizon investment \textit{and} consumption problem, and this adds some terms to their HJB equation accounting for the dynamics of the consumption process.
The proofs in the present context are thus very similar and lengthy as the arguments rely mostly on the viscosity property of the solutions and analytical properties of the value function, so we refer the reader to \cite{Duffie&Zarip} and \cite{Zarip94} for the detailed arguments.

\begin{proposition}\label{prop:uniqueness}
Let $u,v:\bar{\mathcal{O}}\times \mathcal{M}\rightarrow \mathbb{R}$ be such that for each $i\in \mathcal{M}$,
\begin{itemize}
    \item[(i)] $u(\cdot,\cdot,i)$ is a viscosity subsolution of (\ref{eq:viscosityHJB}) on $\bar{\mathcal{O}}$, and a concave and upper-semicontinuous function in the first parameter,
    \item[(ii)] $v(\cdot,\cdot,i)$ is a viscosity supersolution of (\ref{eq:viscosityHJB}) in $\mathcal{O}$, bounded from below, uniformly continuous on $\bar{\mathcal{O}}$, and locally H\"older continuous in $\mathcal{O}$.
\end{itemize}
Also assume that for some locally bounded $D,E:[0,T]\rightarrow [0,\infty)$
\begin{equation}\label{eq:sublinear}
|v(x,t,i)| \leq D(t)+E(t)x^\gamma.
\end{equation}
%This condition follows from estimates of SDE's
Then $u(\cdot,\cdot,i)\leq v(\cdot,\cdot,i)$ on $\bar{\mathcal{O}}$.
\end{proposition}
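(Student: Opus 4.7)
The plan is to use the doubling-of-variables technique for constrained viscosity solutions, adapted to the regime-switching setting following Duffie--Zariphopoulou \cite{Duffie&Zarip} and Zariphopoulou \cite{Zarip94}. I argue by contradiction: assume
\[
M := \max_{i\in\mathcal{M}} \sup_{(x,t)\in\bar{\mathcal{O}}} \bigl(u(x,t,i)-v(x,t,i)\bigr) > 0,
\]
and pick $i^\star\in\mathcal{M}$ attaining the outer maximum, which is legitimate because $\mathcal{M}$ is finite. The terminal condition $u(\cdot,T,i)\leq U(\cdot)=v(\cdot,T,i)$ (which I deduce for each regime from the sub/super-solution property together with the prescribed boundary data), the sublinear growth of $u$ (from concavity and upper-semicontinuity), and the bound \eqref{eq:sublinear} on $v$ then allow me to subtract a penalty $\psi_\eta(x)=\eta(1+x^\gamma)$ so that for $\eta>0$ small the quantity $u(\cdot,\cdot,i^\star)-v(\cdot,\cdot,i^\star)-\psi_\eta$ still has a strictly positive supremum, attained on a compact subset of $\bar{\mathcal{O}}$ bounded away from $t=T$.

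Next, for each large $\alpha>0$ I form the doubled function
\[
\Psi_\alpha(x,y,t,s) = u(x,t,i^\star) - v(y,s,i^\star) - \tfrac{\alpha}{2}(x-y)^2 - \tfrac{\alpha}{2}(t-s)^2 - \psi_\eta(x),
\]
take a maximizer $(x_\alpha,y_\alpha,t_\alpha,s_\alpha)$ on the compactified region, and invoke the classical estimates $\alpha(x_\alpha-y_\alpha)^2+\alpha(t_\alpha-s_\alpha)^2\to 0$ with convergence along a subsequence to a diagonal limit $(\bar{x},\bar{x},\bar{t},\bar{t})$. The constrained viscosity property of $u$ on $\bar{\mathcal{O}}$ is essential here: it guarantees that the subsolution inequality for $u$ remains valid even if $x_\alpha=0$, so that the absence of a supersolution inequality for $v$ on the boundary does not obstruct the argument. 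The Crandall--Ishii lemma then furnishes matching second-order sub- and superjets for $u$ at $(x_\alpha,t_\alpha)$ and $v$ at $(y_\alpha,s_\alpha)$, and inserting them into the viscosity inequalities and subtracting yields, in the limit $\alpha\to\infty$, an inequality of the schematic form
\[
0 \geq \bigl[H(\bar{x},\bar{t},i^\star,p,A_u) - H(\bar{x},\bar{t},i^\star,p,A_v)\bigr] + \bigl[Qu(\bar{x},\bar{t},\cdot)(i^\star)-Qv(\bar{x},\bar{t},\cdot)(i^\star)\bigr] + c(\eta),
\]
where the time derivatives cancel, $c(\eta)>0$ is the penalty contribution, and the regime-switching difference is non-positive because $(u-v)(\cdot,\cdot,j)\leq M$ for every $j$ with equality at $j=i^\star$ at the limiting point, so every term $q_{i^\star j}[(u-v)(\bar{x},\bar{t},j)-(u-v)(\bar{x},\bar{t},i^\star)]$ is $\leq 0$ (upper-semicontinuity of $u$ and uniform continuity of $v$ allow passage to the limit). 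Letting $\eta\downarrow 0$ then contradicts the assumption $M>0$.

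The principal technical obstacle is the unboundedness of the control space $\theta\in\mathbb{R}$, which normally prevents the comparison from closing because the Hamiltonian $H$ fails to be uniformly continuous in its matrix argument. The assumed concavity of $u$ in $x$ is precisely what resolves this: it forces the second-order element $A_u$ in the subjet of $u$ to be $\leq 0$, which restricts the effective range of $\theta$ in the supremum defining $H$ to a bounded set determined by $(p,A_u,A_v)$. Combined with the locally Hölder continuity of $v$ in $\mathcal{O}$ assumed in (ii), this yields enough quantitative control on the Hessian and gradient differences produced by the doubling procedure to close the estimate. The remaining bookkeeping, though lengthy, is essentially routine once these structural choices are made and can be imported from \cite{Duffie&Zarip,Zarip94} with only minor modifications to accommodate the finite-regime switching term.
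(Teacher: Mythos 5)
Your overall strategy is the same one the paper adopts (and itself only sketches, deferring the details to Duffie--Zariphopoulou and Zariphopoulou): argue by contradiction, subtract a growth penalty so the supremum of the penalized difference is attained, double the variables, apply the Crandall--Ishii lemma, kill the $Q$-term by maximality of $(u-v)(\cdot,\cdot,i^\star)$ over the regimes, and use concavity of $u$ to tame the unbounded control set. Two of your steps, however, would fail as written.

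First, the penalization. The paper subtracts $c(x+t)^\lambda$ with $\lambda\in(\gamma,1)$ chosen \emph{strictly larger} than $\gamma$; this is exactly what makes $u-v-c(x+t)^\lambda\to-\infty$ as $x\to\infty$ against the $E(t)x^\gamma$ term in (\ref{eq:sublinear}), so that the penalized supremum is attained. Your penalty $\psi_\eta(x)=\eta(1+x^\gamma)$ carries the exponent $\gamma$ itself, and you need $\eta$ small to keep the penalized supremum positive; then $\eta x^\gamma$ does not dominate $E(t)x^\gamma$, and the supremum need not be attained on a compact set. The auxiliary claim that concavity and upper-semicontinuity give $u$ ``sublinear growth'' compatible with an $x^\gamma$ penalty is also not correct: concavity only yields at most linear growth, and $x$ is not dominated by $x^\gamma$ (nor by $x^\lambda$, $\lambda<1$). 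So the localization step, which is the one place where the hypotheses (\ref{eq:sublinear}) and boundedness from below actually enter, is broken; it is repaired by taking the exponent in $(\gamma,1)$ and by making explicit whatever growth hypothesis on $u$ is needed (for the intended application $u$ is a value function with $|u|=O(x^\gamma)$).

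Second, you invoke the terminal comparison $u(\cdot,T,i)\leq U(\cdot)=v(\cdot,T,i)$ and say it can be ``deduced'' from the sub/supersolution properties. It cannot: the proposition as stated imposes no terminal data on $u$ or $v$, and $v$ is only assumed to be a supersolution in $\mathcal{O}=(0,\infty)\times[0,T)$, so no equation for $v$ is available at $t=T$ at all. A parabolic comparison principle without some control at $t=T$ is false in general, so this has to enter either as an explicit hypothesis (as it effectively does when the proposition is applied in the corollary, where both functions share the terminal value $U$) or through a genuine argument near $t=T$; asserting it is a gap.
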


Remark that the uniform continuity of the value function $V$ on $\bar{\mathcal{O}}$ follows in view of Propositions \ref{prop:concave} and \ref{prop:t_cont}.

%Remark that
%the uniform continuity on $\bar{\mathcal{O}}$ follows whenever $v$ is uniformly continuous on compact subsets of $\bar{\mathcal{O}}$ and concave in $x$.
%This is the case of the value function $V$ in view of Propositions \ref{prop:concave} and \ref{prop:t_cont}.

We should comment on the conditions in the last proposition.
The continuity assumption of the viscosity subsolutions and supersolutions is stronger than necessary, but make the presentation somehow simpler. Indeed,
if a viscosity solution is continuous hence it is both upper- and lower-semicontinuous.
Then, in Definition \ref{def:viscosity}, we may allow the viscosity supersolutions (subsolutions, resp.) to be only lower(upper, resp.)-semicontinuous.

To simplify notation, in accordance with \cite{Duffie&Zarip} and \cite{Zarip94}, we write $y=(x,t)\in \bar{\mathcal{O}}$. Fix $i\in \mathcal{M}$.
By contradiction, suppose that
\begin{equation}\label{eq:tocontradict}
\sup_{y\in \bar{\mathcal{O}}} [u(y,i)-v(y,i)]>0.
\end{equation}
Then for $c>0$ small enough and $\lambda\in (\gamma,1)$,
\[
\sup_{(x,t)\in \bar{\mathcal{O}}} [u(y,i)-v(y,i)-c(x+t)^\lambda]>0.
\]
The condition in (\ref{eq:sublinear}) together with the fact that $v$ is bounded from below, imply that the above supremum is attained,  %and that $v$ is bounded from below, the supremum is attained at a point $\bar{y}=(\bar{x},\bar{t}) \in \bar{\mathcal{O}}$, that is
\[
\sup_{y\in \bar{\mathcal{O}}} [u(y,i)-v(y,i)-c(x+t)^\lambda]=u(\bar{y},i)-v(\bar{y},i)-c(\bar{x}+\bar{t})^\lambda>0.
\]
Next, applying the idea of doubling the variables, for $\delta>0$ small and $\eta\in\mathbb{R}^2$, %$\eta\in \mathcal{O}$
define the function $\phi:\bar{\mathcal{O}}^2\rightarrow \mathbb{R}$, for $y=(x,t),z=(x',t')$, by
\[
\phi(y,z):=u(y,i)-v(z,i)-\left| \frac{z-y}{\delta}-4\eta\right|^4 -c(x+t)^\lambda.
\]
With this notation, following the proof along the lines in \cite{Duffie&Zarip} and \cite{Zarip94}, upon sending $\delta \downarrow 0$,  $c\downarrow 0$ and $\eta \downarrow (0,0)$, it can be seen that (\ref{eq:tocontradict}) is contradicted.

If $w$ is a constrained viscosity solution of (\ref{eq:viscosityHJB}) on $\bar{\mathcal{O}}$ then a standard comparison theorem of viscosity subsolutions and supersolutions yield that
$V\leq w$ in $\mathcal{O}\times \mathcal{M}$. If moreover, $w$ has the same boundary conditions as $V$, $V\leq w$ on $\bar{\mathcal{O}}\times \mathcal{M}$.
Now further assume that $w$ is concave in $x$, then the above proposition along with the regularity properties of the value function yield that $w\leq V$ on $\bar{\mathcal{O}}\times \mathcal{M}$, and the following corollary is immediate.

\begin{corollary}
The value function $V$ in (\ref{OriginalProblem}) is the unique constrained viscosity solution of the HJB equation (\ref{eq:viscosityHJB}) on $\bar{\mathcal{O}}\times \mathcal{M}$ in the class of concave functions in $x$, with the terminal condition $V(x,T,i)=U(x)$ for $x\in[0,\infty)$.
\end{corollary}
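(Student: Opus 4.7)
The plan is to combine three ingredients: the existence of $V$ as a constrained viscosity solution (Proposition \ref{prop:existence}), the regularity properties of $V$ (Propositions \ref{prop:concave} and \ref{prop:t_cont}, together with the growth estimate $|V|\leq O(x^\gamma)$), and the comparison principle (Proposition \ref{prop:uniqueness}). Existence and concavity in $x$ (Proposition \ref{prop:concave}) immediately confirm that $V$ itself lies in the claimed class, so the corollary will follow once uniqueness is shown.

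To establish uniqueness, I would let $w:\bar{\mathcal{O}}\times \mathcal{M}\to \mathbb{R}$ be any constrained viscosity solution of (\ref{eq:viscosityHJB}), concave in $x$, with $w(x,T,i)=U(x)$, and prove $V=w$ via two inequalities. For $V\leq w$: treating $V$ as the subsolution (which holds on $\bar{\mathcal{O}}$ by the constrained property) and $w$ as the supersolution on $\mathcal{O}$, a standard comparison argument for viscosity solutions, exploiting $V$'s uniform continuity, growth bound, and the matching terminal condition at $t=T$, yields $V\leq w$ first on $\mathcal{O}\times \mathcal{M}$ and then on $\bar{\mathcal{O}}\times \mathcal{M}$ via continuity up to the boundary. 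For $w\leq V$: apply Proposition \ref{prop:uniqueness} with $u=w$ and $v=V$. The required properties of $u=w$ (subsolution on $\bar{\mathcal{O}}$, concave and upper-semicontinuous in $x$) hold by hypothesis, since concavity on $[0,\infty)$ implies continuity on $(0,\infty)$ and upper-semicontinuity at $x=0$. The required properties of $v=V$ (supersolution on $\mathcal{O}$, bounded from below, uniformly continuous on $\bar{\mathcal{O}}$, locally H\"older continuous in $\mathcal{O}$, and the growth condition (\ref{eq:sublinear})) are assembled from Proposition \ref{prop:concave} (Lipschitz continuity in $x$), Proposition \ref{prop:t_cont} ($1/2$-H\"older continuity in $t$), and the growth estimate. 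Combining $V\leq w$ and $w\leq V$ yields $V=w$ on $\bar{\mathcal{O}}\times \mathcal{M}$.

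The main obstacle is the careful verification of the regularity hypotheses of Proposition \ref{prop:uniqueness} for $V$ rather than the uniqueness skeleton itself. In particular, uniform continuity on the closed domain $\bar{\mathcal{O}}$ requires stitching the spatial Lipschitz estimate (from concavity together with continuity at $x=0$) to the temporal $1/2$-H\"older estimate, uniformly over compact $x$-sets; and the lower-bound condition, together with the growth bound (\ref{eq:sublinear}), must be checked in both utility regimes $\gamma\in(0,1)$ and $\gamma<0$. Once these technical properties are in place, the two comparison inequalities close the argument and the corollary is immediate.
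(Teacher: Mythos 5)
Your proof is correct and follows essentially the same route as the paper: $V\leq w$ via a standard comparison of the subsolution $V$ (on $\bar{\mathcal{O}}$, from Proposition \ref{prop:existence}) against the supersolution $w$ (in $\mathcal{O}$) together with the shared terminal condition, and $w\leq V$ via Proposition \ref{prop:uniqueness} applied with $u=w$ and $v=V$, whose hypotheses are supplied by Propositions \ref{prop:concave} and \ref{prop:t_cont} and the growth estimate. The only small inaccuracy is attributing upper-semicontinuity of $w$ at $x=0$ to concavity alone (a concave function on $[0,\infty)$ can fail to be upper-semicontinuous at the left endpoint), but this is harmless since continuity of $w(\cdot,\cdot,i)$ is already part of Definition \ref{def:viscosity}.
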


\begin{proposition}\label{prop:smooth}
The value function $V$ in (\ref{OriginalProblem}) is the unique function in the class
$C([0,\infty)\times [0,T])\cap C^{2,1}((0,\infty)\times [0,T))$ relative to the parameters $(x,t)$, and concave in $x$,
which satisfies the HJB equation
\begin{equation} \label{eq:HJB}
\sup_{\theta \in \mathbb{R}} \mathbb{L}^\theta V(x,t,i)  =0,% \qquad  (x,t,i)\in \mathcal{O}\times \mathcal{M}
\end{equation}
with terminal condition $V(x,T,i)=U(x)$ for $x\in[0,\infty)$.
\end{proposition}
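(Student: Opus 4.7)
The plan is to identify $V$ with the explicit candidate $\tilde V(x,t,i) := U(x)\,g(i,T-t)$ built heuristically in Section \ref{Construction}, with $g$ given by the Feynman-Kac representation (\ref{eq:F-Krep}), and then to read the required smoothness off this closed-form expression.

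First, I would establish the regularity of the candidate. Because $r,\mu,\sigma$ take only finitely many values, the integrand inside the exponential in (\ref{eq:F-Krep}) is almost surely bounded on $[0,T]$, so $g(i,\cdot)$ is well-defined, strictly positive and bounded. Moreover $g(i,\cdot)$ satisfies a linear ODE system (with time-independent coefficients coupled through $Q$), hence $g(i,\cdot)\in C^\infty([0,T])$. Consequently $\tilde V\in C([0,\infty)\times[0,T])\cap C^{2,1}((0,\infty)\times[0,T))$, the terminal condition $\tilde V(x,T,i)=U(x)$ holds since $g(i,0)=1$, and strict concavity in $x$ follows from $\tilde V_{xx}=U''(x)\,g(i,T-t)<0$. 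Because $\tilde V_{xx}<0$ the quadratic in $\theta$ inside $\mathbb{L}^\theta\tilde V$ has the maximizer (\ref{eq:candidatestrat}); substituting this back into (\ref{eq:HJB}) yields exactly (\ref{eq:HJBconstrained}), which $\tilde V$ solves classically by construction.

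Second, I would invoke the preceding Corollary. Since $\tilde V$ is a classical solution it is automatically both a viscosity sub- and supersolution on $\mathcal{O}\times\mathcal{M}$, and the subsolution property extends to $\bar{\mathcal{O}}\times\mathcal{M}$ by continuity, so $\tilde V$ is a constrained viscosity solution of (\ref{eq:viscosityHJB}), concave in $x$, with terminal datum $U$. That Corollary then forces $\tilde V=V$, transferring the $C^{2,1}$ regularity and the classical HJB identity to $V$. For the uniqueness half of the proposition, any other $W$ of the same regularity class, concave in $x$, satisfying (\ref{eq:HJB}) classically with $W(x,T,i)=U(x)$, is likewise a constrained viscosity solution with the correct boundary data, so a second application of the Corollary gives $W=V$.

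The main obstacle is really the first step: the heavy lifting behind uniqueness is already packaged in Proposition \ref{prop:uniqueness}, so what remains is to confirm that the candidate $\tilde V$ genuinely qualifies as a \emph{constrained} viscosity solution in the sense of Definition \ref{def:viscosity}. This requires checking the subsolution inequality on the full parabolic boundary (in particular at $x=0$, where $U'$ can be singular when $\gamma<0$) and verifying the growth estimate $|\tilde V|\leq D(t)+E(t)x^\gamma$ needed by Proposition \ref{prop:uniqueness}, which is inherited from the boundedness of $g$ (with $D\equiv 0$ and $E$ constant). Once these boundary and growth conditions are in place, the identification $\tilde V=V$ reduces to a direct invocation of the uniqueness Corollary.
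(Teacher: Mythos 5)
Your route is genuinely different from the paper's. The paper proves smoothness of $V$ \emph{intrinsically}: following Zariphopoulou's bootstrap (Theorem 5.1 in \cite{Zarip94}), it shows that $V$ solves a \emph{uniformly} parabolic HJB equation on rectangles $(x_1,x_2)\times(t_1,t_2)$ with its own continuous boundary data, invokes classical parabolic regularity (Theorem 6.3.6 in \cite{Friedman}) together with a localization freezing the Markov chain up to its first jump, and identifies the resulting smooth solution with $V$ via uniqueness of viscosity solutions; the explicit candidate plays no role. You instead exhibit the closed form $U(x)g(i,T-t)$, check it is a smooth classical concave solution, and pull $V$ back to it through the uniqueness Corollary. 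Your argument is more elementary (no parabolic regularity theory), but it inverts the paper's architecture: the paper keeps Proposition \ref{prop:smooth} candidate-free precisely so that Theorem \ref{thm:main} can then be proved by verifying that $U(x)g(i,T-t)$ meets its hypotheses, whereas in your version Theorem \ref{thm:main} is effectively absorbed into the proof of Proposition \ref{prop:smooth}. There is no circularity, since you rely only on the Corollary, which precedes the proposition.

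One step needs more care than you give it. A classical solution in $\mathcal{O}$ that is continuous up to $x=0$ is \emph{not} automatically a viscosity subsolution on $\bar{\mathcal{O}}$ ``by continuity'': the constrained (state-constraint) subsolution property at $x=0$ is an extra boundary condition in the viscosity sense, and it is exactly the condition on which the Corollary's uniqueness class rests. For the candidate with $\gamma\in(0,1)$ it holds vacuously---since $U'(0^+)=+\infty$, no $C^2$ test function can touch $U(x)g(i,T-t)$ from above at a point with $x_0=0$---but that is an argument about the infinite slope at the origin, not about continuity. More seriously, in your uniqueness half the same issue arises for an arbitrary competitor $W$, which need not have infinite slope at $0$; there you must either derive the boundary subsolution inequality from concavity and the interior equation, or arrange the comparison so that $W$ is used only as a supersolution in $\mathcal{O}$ (taking $u=V$, $v=W$ in Proposition \ref{prop:uniqueness}, which in turn requires verifying the growth bound (\ref{eq:sublinear}) for $W$). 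You flag the boundary check as the main obstacle but do not carry it out; as written, ``extends by continuity'' is the one assertion that would not survive scrutiny. (For $\gamma<0$ there is the additional wrinkle that $U(0^+)=-\infty$, so continuity on $[0,\infty)$ fails for the candidate itself; this defect is shared with the statement of the proposition.)
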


The proof goes along the lines of Theorem 5.1 with $f= +\infty$ in \cite{Zarip94} and some natural appropriate modifications which we now state.
We need to consider the time parameter.
Zariphopoulou shows that the value function $v$ in \cite{Zarip94}
solves a uniformly elliptic HJB equation in intervals $(x_1,x_2)\subset[0,\infty)$ with boundary conditions $v(x_1)$, $v(x_2)$,
which along with the uniqueness of viscosity solutions yields that $v$ is smooth in $(x_1,x_2)$.
In our context, it can be shown instead that $V$ solves a uniformly parabolic HJB equation in open rectangles $R=(x_1,x_2)\times (t_1,t_2) \subset [0,\infty)\times [0,T]$
with boundary conditions
\[
\begin{aligned}
V(x,t_2,i), & \qquad x\in(x_1,x_2) \\
V(x,t,i),   & \qquad (x,t)\in \{x_1,x_2\}\times [t_1,t_2)
\end{aligned}
\]
Upon using a localization argument similar to that of Proposition \ref{prop:uniqueness} (freeze the Markov chain up to the first jump time),
it is implied that $V(\cdot,\cdot,i)$ is smooth in $R$ (see e.g. Theorem 6.3.6 in \cite{Friedman}).

We conclude this section with the following important result.

\begin{theorem}\label{thm:main}
The value function is given by $V(x,t,i)=U(x)g(i,T-t)$ where $g(i,T-t)$ is in (\ref{eq:F-Krep}).
Moreover, the feedback optimal trading strategy is given by $\theta^*_s=\bar{\theta}(X_s^*,s,Y_s)$ with
\[
\bar{\theta}(x,t,i)=\frac{[\mu(i)-r(i)]x}{(1-\gamma)\,\sigma^2(i)},
\]
where $X^*$ is the optimal wealth process solving (\ref{eq:wealth_dyn}) with $\theta^*$.
\end{theorem}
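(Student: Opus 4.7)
The plan is to carry out a verification argument by showing that the candidate $\tilde V(x,t,i):=U(x)g(i,T-t)$, with $g$ given by the Feynman--Kac representation in (\ref{eq:F-Krep}), satisfies the HJB equation (\ref{eq:HJB}) together with the terminal condition $U$ and lies in the uniqueness class of Proposition \ref{prop:smooth}; that proposition then identifies $\tilde V$ with $V$, and the explicit form of the pointwise maximizer will give the feedback $\bar\theta$.

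First, I would establish the required regularity of $\tilde V$. The function $g(i,\cdot)$ is strictly positive and continuously differentiable on $[0,T]$: it is the unique bounded solution of a finite linear system of ODEs, obtained by conditioning (\ref{eq:F-Krep}) on the first jump of $Y$, with initial datum $g(i,0)=1$ and coefficients depending only on $\mu(\cdot),r(\cdot),\sigma(\cdot)$ and $Q$. Since $U\in C^\infty((0,\infty))$, this yields $\tilde V\in C([0,\infty)\times[0,T])\cap C^{2,1}((0,\infty)\times[0,T))$; moreover, because $g>0$ and $U$ is strictly concave on $(0,\infty)$ (as $\gamma<1$, $\gamma\neq 0$), $\tilde V(\cdot,t,i)$ is strictly concave with $\tilde V_{xx}<0$.

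Next, I would verify that $\tilde V$ solves (\ref{eq:HJB}). Since $\tilde V_{xx}<0$, the quadratic $\theta\mapsto\mathbb L^\theta\tilde V$ attains its pointwise maximum at
\[
\theta^{*}(x,t,i)=-\frac{(\mu(i)-r(i))\,\tilde V_x}{\sigma^2(i)\,\tilde V_{xx}}=\frac{\mu(i)-r(i)}{(1-\gamma)\,\sigma^2(i)}\,x,
\]
where I used the CRRA identities $xU'(x)/U(x)=\gamma$ and $-xU''(x)/U'(x)=1-\gamma$. Substituting this maximizer into (\ref{eq:HJB}) and dividing through by $U(x)$ collapses the equation into the $x$-independent linear Cauchy problem for $g$ derived in Section \ref{Construction}, which $g$ satisfies by construction; the terminal condition $\tilde V(x,T,i)=U(x)g(i,0)=U(x)$ follows from $g(i,0)=1$. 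Proposition \ref{prop:smooth} then forces $V=\tilde V$, yielding the claimed formula for the value function.

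Finally, for the optimal strategy I would set $\bar\theta(x,t,i):=(\mu(i)-r(i))x/[(1-\gamma)\sigma^2(i)]$ and verify admissibility: $\bar\theta(\cdot,t,i)$ is linear in $x$ with coefficient bounded in $i\in\mathcal M$, hence it satisfies (\ref{eq:coeff_cond}). Plugging $\bar\theta$ into (\ref{eq:wealth_dyn}) gives a linear SDE of the form $dX^*_s=X^*_s[\alpha(Y_s)\,ds+\beta(Y_s)\,dB_s]$ with $\alpha,\beta$ bounded on $\mathcal M$; its unique solution is a regime-switching geometric Brownian motion, hence $X^*_s>0$ a.s.\ for $x\geq 0$ and the integrability condition (\ref{eq:admissibility_cond}) is immediate, so $\theta^*\in\mathcal A(x,t,i)$. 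The main obstacle is the smoothness and strict concavity of $\tilde V$, i.e., checking that $g(i,\cdot)$ is a classical $C^1$ solution of a well-posed linear system and remains strictly positive on $[0,T]$, so that $\tilde V_{xx}<0$ and the pointwise HJB maximizer is legitimately $\bar\theta$; everything else is algebraic or routine. An independent sanity check, if desired, is the classical supermartingale/martingale argument: applying It\^o's formula to $\tilde V(X_s,s,Y_s)$ under an arbitrary admissible $\theta$ yields a local supermartingale by the HJB inequality, with equality precisely when $\theta_s=\bar\theta(X_s,s,Y_s)$.
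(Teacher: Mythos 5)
Your verification argument is essentially the paper's own proof: both check that the candidate $U(x)g(i,T-t)$ has the regularity and concavity required by Proposition~\ref{prop:smooth}, satisfies the HJB equation by construction of $g$, and then confirm admissibility of the feedback strategy by noting that it turns (\ref{eq:wealth_dyn}) into a regime-switching geometric Brownian motion. The only substantive difference is that where you call the integrability condition (\ref{eq:admissibility_cond}) ``immediate,'' the paper actually carries out the second-moment estimate via an explicit change of measure to obtain $\mathbb{E}[(\theta^*_s)^2\mid\mathcal{F}_t]\leq K e^{2(s-t)\bar r}$; this step is standard for bounded coefficients, so your claim is correct, but it deserves the one-line computation rather than dismissal.
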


Observe that the optimal proportion of wealth invested in the risky asset, that is $\theta_s^*/X^*_s$, remains constant during every regime,
which agrees with Merton's proportion when there is no change of regime in the market.

\pagebreak
It is easy to verify that the candidate function $V(x,t,i)=U(x)g(i,T-t)$, as constructed in Section \ref{Construction},
satisfies all the requirements of Proposition \ref{prop:smooth}.
The fact that $V(\cdot,\cdot,i)\in C([0,\infty)\times [0,T])$ and $V(\cdot,\cdot,i)\in C^{2,1}((0,\infty)\times [0,T))$,
follows from $U\in C^2((0,\infty))$ and the Feynman-Kac representation of $g(i,T-t)$ in (\ref{eq:F-Krep}).
Also, the utility function $U$ is %non-negative, $U(0)=0$ and $g(i,T-t)\geq 0$, then for any $x>0$ we see that $V(x,t,i)\geq 0$. % and $\lim_{x\downarrow 0}V(x,i,t)=0$.
strictly concave in $x$, so is the candidate function $V$.
The fact that $U(x)g(i,T-t)$ satisfies the HJB equation follows by construction.
Moreover, since $V$ is concave, the trading strategy $\theta^*$ in (\ref{eq:candidatestrat}) is valid. 
Indeed, substitution of
\[
\theta_s^*=\frac{[\mu(Y_s)-r(Y_s)]X_s^*}{(1-\gamma)\,\sigma^2(Y_s)}
\]
in (\ref{eq:wealth_dyn}) gives
\[
dX^*_s=X_s^*\left[\frac{\Phi^2(Y_s)}{(1-\gamma)}+r(Y_s)\right]ds+X_s^* \frac{\Phi(Y_s)}{(1-\gamma)} dB_s, \qquad X^*_t=x,
\]
where $\Phi(y)=[\mu(y)-r(y)]/\sigma(y)$ is the Sharpe ratio. Then, under the strategy $\theta^*$,
wealth stays positive if $x>0$ and it is absorbed at zero if $x=0$. % and we may write

It remains to show that $\theta^*$ is an admissible strategy. To see this,
we apply a change of measure with Radon-Nikodym derivative
\[
\frac{d\tilde{\mathbb{P}}}{d\mathbb{P}}=\exp\left\{2\int_t^T \frac{\Phi(Y_u)}{a}dB_u-\int_t^T \frac{\Phi^2(Y_u)}{a^2}du\right\}
\]
and observe that
\[
\mathbb{E}[(X^*_s)^2 \mid \mathcal{F}_t ]=x^2 \tilde{\mathbb{E}}\left[e^{2\int_t^s r(Y_u)\,du }\mid \mathcal{F}_t\right].
%\leq x^2e^{2(s-t)\bar{r}}
\]
Henceforth, for any $s\in [t,T]$,
\[
\mathbb{E}[(\theta_s^*)^2 \mid \mathcal{F}_t]\leq K \,e^{2(s-t)\bar{r}}
\]
where $\bar{r}=\max_{i\in\mathcal{M}} r(i)$ and $K=(\frac{x}{a}\max_{i\in\mathcal{M}}\Phi(i))^2$,
which together with Fubini's Theorem implies that the trading strategy $\theta^*$ satisfies the integrability condition (\ref{eq:admissibility_cond}) and is admissible.

%%%%%%%%%%%%%%%%%%%%%%%%%%%%%%%%%%%%%%%%%%%%%%%%%%%%%%%%%%%%%%%%%%%%%%%%%%%%%%%%%%%%%%%%%%%%%%%%%%%%%%%%%%%%%%%%%%%%%%%%%%%%%%%
%%%%%%%%%%%%%%%%%%%%%%%%%%%%%%%%%%%%%%%%%%%%%%%%%%%%%%%%%%%%%%%%%%%%%%%%%%%%%%%%%%%%%%%%%%%%%%%%%%%%%%%%%%%%%%%%%%%%%%%%%%%%%%%
%%%%%%%%%%%%%%%%%%%%%%%%%%%%%%%%%%%%%%%%%%%%%%%%%%%%%%%%%%%%%%%%%%%%%%%%%%%%%%%%%%%%%%%%%%%%%%%%%%%%%%%%%%%%%%%%%%%%%%%%%%%%%%%

\section{Computation of the value function by Laplace transforms} \label{sec:Laplace}

Consider the power utility $U(x)=\frac{1}{\gamma}x^{\gamma}$, $\gamma<1$, $\gamma\neq 0$.
%This utility belongs to the iso-elastic class considered in this paper and all the conditions of Theorem \ref{thm:verification} are met.
When there is no regime switching, the optimization problem in (\ref{OriginalProblem}) corresponds to the classical Merton's problem (\cite{Merton69,Merton71})
and it is well-known that the solution is given by
\begin{equation} \label{eq:Merton_value}
V(x,t)=\frac{1}{\gamma}x^{\gamma}  e^{\left[\frac{\gamma}{1-\gamma}\frac{(\mu-r)^2}{2\sigma^2}+\gamma r  \right](T-t)}
\end{equation}
with
\[
\theta^*_s=\frac{(\mu-r)X^*_s}{(1-\gamma)\sigma^2}.
\]
%If the interest rate is constant,
%This special case together with the Markov property, suggest that
In the regime switching case, we found that the solution is given by
\begin{equation} \label{eq:power_value}
V(x,t,i)=\frac{1}{\gamma}x^{\gamma} g(i,T-t)
\end{equation}
where $g(i,T-t)$ is in (\ref{eq:F-Krep})
with $g(i,0)=1$, and $\theta^*_s$ is as in Theorem \ref{thm:main}.
We next show how to explicitly compute $g(i,T-t)$ by Laplace transform methods.

Given a continuous function $F(t)$ of exponential order (that is, $|F(t)|\leq Ke^{pt}$ for some $K>0$ and $p\geq 0$), the Laplace transform of $F$ is the function defined by
\[
\mathcal{L}\{F(t)\}(u)= \int_0^\infty F(t)e^{-ut}dt.
\]
In what follows, we use two basic properties of the Laplace transform (c.f. \cite{Dyke}) to compute the value function explicitly. Namely,
\[
\mathcal{L}\{F(t)\}(u)=\frac{n!}{(u+b)^{n+1}},\qquad \mbox{if $F(t)=t^n e^{-bt}$},
\]
and the Laplace transform of the convolution of two functions
\[
\mathcal{L}\left\{\int_0^t F(s)G(t-s)ds\right\}(u)=\mathcal{L}\{F(t)\}\mathcal{L}\{G(t)\}(u).
\]
%\subsection{The case $a\neq 1$}

We now concentrate on the computation of the functions $g(i,T-t)$ in (\ref{eq:F-Krep})
for each $i=1,2,\ldots,m$ which solve the coupled PDE
\[
\begin{aligned}
g_t-Qg(\cdot,T-t)(i)+\delta(i) g  & =0 \\
                                                g(i,0)& =1
\end{aligned}
\]
using Laplace transforms and the method can be applied to any set of parameters $\{\delta(i)\in\mathbb{R}: i=1,2,\ldots,m\}$.
We know that the stochastic representation of the solution is given by
\[
g(i,T-t)=\mathbb{E}\left[\exp\left\{-\int_0^{T-t} \delta(Y_u) du \right\} \mid Y_0=i\right].
%\mathbb{E}\left[\exp\left\{-\int_t^T \frac{(\mu(Y_u)-r)^2}{2\sigma^2(Y_u)} \right\}du \mid Y_t=i\right]
\]
Remark that we do not require $\delta(i)$ to have a particular sign for the stochastic representation to be valid, because
$\delta(Y_u)$ is piecewise constant, hence uniformly bounded.
We are interested, in particular,
%for the %exponential utility we set
%\[
%\delta(y)=\frac{(\mu(y)-r)^2}{2\sigma^2(y)}
%\]
%and for the power utility
%\[
%\delta(i)=\frac{\gamma}{(\gamma-1)}\frac{(\mu(i)-r(i))^2}{2\sigma^2(i)}-\gamma r(i).
%\]
on the case
\[
\delta(i)=-\gamma\left[\frac{(\mu(i)-r(i))^2}{2(1-\gamma)\sigma^2(i)}+ r(i)\right]
\]
which arises in the optimization problem for an investor with a power utility function.
%If $0<a<1$ (resp. $a>1$), the parameter $\delta(i)<0$ (resp. $\delta(i)>0$) and changes sign if $a<0$.

Conditional on the initial condition $Y_0=i$, consider the first jump time of the Markov chain from the state $i$, $\tau_i:=\inf\{t\geq 0: Y_t\neq i\}$.
$\tau_i$ is an exponentially distributed random variable with parameter $q_i$.
Then, by splitting the expectation into the events $\{\tau_i>T-t\}$ and $\{\tau_i\leq T-t\}$, it is easy to see that
\[
\begin{aligned}
g(i,T-t)    & = e^{-(\delta(i)+q_i)(T-t)} \\
            & \hspace{-1.2cm}+\sum_{j\neq i} q_{ij} \int_0^{T-t} e^{-(\delta(i)+q_i)s} \mathbb{E}\left[\exp\left\{-\int_s^{T-t}\delta(Y_u)du\right\} \mid \tau_i=s, Y_s=j\right]ds
\end{aligned}
\]
and by the Markov property, this simplifies to
\begin{equation} \label{eq:system_gs}
g(i,T-t)    = e^{-(\delta(i)+q_i)(T-t)}
 + \sum_{j\neq i} q_{ij}\int_0^{T-t} e^{-(\delta(i)+q_i)s} g(j,T-t-s)ds.
\end{equation}

Taking Laplace transforms for each $i=1,2,\ldots,m$, we obtain
\[
\mathcal{L}\{g(i,T-t)\}(u)=\frac{1}{u+\delta(i)+q_i}+\sum_{j\neq i}\frac{q_{ij}}{u+\delta(i)+q_i}\mathcal{L}\{g(j,T-t)\}(u)
\]
which gives a linear system of $m$ equations in $m$ unknowns. Setting
%\[
%b_i:=\frac{1}{u+\delta(i)+q_i}, \qquad a_{ij}:=\frac{q_{ij}}{u+\delta(i)+q_i}
%\]
\[
D_i:=u+\delta(i)+q_i
\]
the system may be written as
\begin{equation}\label{eq:system}
\begin{bmatrix}
    D_1     & -q_{12} & -q_{13} & \dots  & -q_{1m} \\
    -q_{21} & D_2     & -q_{23} & \dots  & -q_{2m} \\
    -q_{31} & -q_{32} & D_3     & \dots  & -q_{3m}\\
    \vdots  & \vdots  & \vdots  & \ddots & \vdots \\
    -q_{m1} & -q_{m2} & -q_{m3} & \dots  & D_m
\end{bmatrix}
\begin{bmatrix}
\mathcal{L}\{g(1,T-t)\}(u) \\
\mathcal{L}\{g(2,T-t)\}(u) \\
\mathcal{L}\{g(3,T-t)\}(u) \\
\vdots\\
\mathcal{L}\{g(m,T-t)\}(u)
\end{bmatrix}
=
\begin{bmatrix}
1\\ 1\\ 1\\ \vdots\\1
\end{bmatrix}.
\end{equation}
%\[
%\begin{bmatrix}
%    1       & -a_{12} & -a_{13} & \dots  & -a_{1m} \\
%    -a_{21} & 1       & -a_{23} & \dots  & -a_{2m} \\
%    -a_{31} & -a_{32} & 1       & \dots  & -a_{3m}\\
%    \vdots  & \vdots  & \vdots & \ddots & \vdots \\
%    -a_{m1} & -a_{m2} & -a_{m3} & \dots  & 1
%\end{bmatrix}
%\begin{bmatrix}
%\mathcal{L}\{g(1,T-t)\}(u) \\
%\mathcal{L}\{g(2,T-t)\}(u) \\
%\mathcal{L}\{g(3,T-t)\}(u) \\
%\vdots\\
%\mathcal{L}\{g(m,T-t)\}(u)
%\end{bmatrix}
%=
%\begin{bmatrix}
%b_1\\ b_2\\ b_3\\ \vdots\\b_m
%\end{bmatrix}.
%\
The solution of this linear system gives Laplace transforms in the form of proper rational functions in $u$.
Taking inverse Laplace transforms gives the desired solution of the function $g(i,T-t)$.

For instance,
%\subsection{The two-states case}
in the case $m=2$, upon using that $q_1\equiv q_{12}$ and $q_2\equiv q_{21}$ we obtain%, assume that $q_1,q_2>0$. We obtain
\[
\mathcal{L}\{g(1,T-t)\}(u)=\frac{1}{D_1}\left[q_1\left(\frac{D_1+q_2}{D_1D_2-q_1q_2}\right)+1\right]
%= \frac{a_{12}(b_1 a_{21}+b_2)}{1-a_{12}a_{21}}+b_1
\]
and
\[
\mathcal{L}\{g(2,T-t)\}(u)=\frac{D_1+q_2}{D_1D_2-q_1q_2}.
%=\frac{b_1 a_{21}+b_2}{1-a_{12}a_{21}}.
\]

Substitution of the parameters yields, for $i=1,2$, the proper rational function
\[
\mathcal{L}\{g(i,T-t)\}(u)  = \frac{u+\alpha_i}{u^2+ \beta_1 u + \beta_0},
\]
%\[
%\mathcal{L}\{g(2,T-t)\}(u)  = \frac{u+\alpha(2)}{u^2+ \beta_1 u + \beta_2}
%\]
where
\[
\begin{aligned}
\alpha_1&:=\delta(2)+q_1+q_2 \\
\alpha_2&:=\delta(1)+q_1+q_2 \\
\beta_0&:=\delta(1)\delta(2)+\delta(1)q_2+\delta(2)q_1\\
\beta_1&:=\delta(1)+\delta(2)+q_1+q_2.
\end{aligned}
\]

Let $u_1,u_2$ be the roots of the quadratic polynomial $u^2+ \beta_1 u + \beta_0=0$.
Simple calculations show that $\beta^2_1-4\beta_0>0$, so that there are two distinct real roots.
%WE USE THE PARTIAL FRACTIONS DECOMPOSITION METHOD
Then the inverse Laplace transform is, for $i=1,2$,
%\[
%\begin{split}
%g(i,T-t)=\mathcal{L}^{-1}\left\{\left(\frac{u+\alpha_i}{u^2+ \beta_1 u + \beta_2}\right)\right\}(T-t) & \\
%& \hspace{-8cm}
%=e^{-\beta_1(T-t)/2}
%\left[\mathcal{L}^{-1}\left\{\left(\frac{u}{u^2-d^2}\right)\right\}(T-t)
%+\frac{\alpha_i-\beta_1/2}{d}\mathcal{L}^{-1}\left\{\left(\frac{d}{u^2-d^2}\right)\right\}(T-t) \right] \\
%& \hspace{-8cm}
%=\frac{e^{-\beta_1(T-t)/2}}{2}
%\left[ e^{d(T-t)}+e^{-d(T-t)}+\frac{\alpha_i-\beta_1/2}{d}\left(e^{d(T-t)}-e^{-d(T-t)}\right)\right] .
%\end{split}
%\]
%Noting that $u_1-u_2=2d$, the solutions are given by
\[
g(i,T-t)=\frac{1}{u_1-u_2}\left[(u_1+\alpha_i)e^{u_1(T-t)}-(u_2+\alpha_i)e^{u_2(T-t)}\right].
\]
%After algebraic calculations,
%\[
%u_{1,2}=\frac{-\beta_1\pm \sqrt{(\delta(1)+\delta(2)+q_1+q_2)^2+4(\delta(1)q_1+\delta(2)q_2)}}{2}
%\]
Note that this expression agrees with the terminal condition $g(i,0)=1$.

%\subsection{The three-states case}

%Solving the linear system gives, for each $i=1,2,3$,
%\[
%\mathcal{L}\{g(i,T-t)\}(u)= B_i
%\]
%where
%\[
%\begin{split}
%B_3 &:=\;\frac{(b_1a_{21}+b_2)(a_{12}a_{31}+a_{32})+(b_1a_{31}+b_3)(1-a_{12}a_{21})}{(1-a_{12}a_{21})(1-a_{13}a_{31})-(a_{12}a_{31}+a_{32})(a_{13}a_{21}+a_{23})} \\
%%-\;\frac{(a_{12}a_{31}+a_{32})(b_1a_{21}+b_2)+(b_1a_{31}+b_3)(a_{12}a_{21}-1)}{(a_{12}a_{31}+a_{32})(a_{13}a_{21}+a_{23})+(1-a_{13}a_{31})(a_{12}a_{21}-1)} \\
%B_2 &:= \frac{B_3(a_{13}a_{31}+a_{32})+(b_1a_{21}+b_2)}{1-a_{12}a_{21}} \\
%B_1 &:= B_2a_{12}+B_3a_{13}+b_1.
%\end{split}
%\]

In general, we imply the following result.

\begin{theorem}
For each $i=1,2,\ldots,m$, the Laplace transforms of $g(i,T-t)$ are in the form
\begin{equation} \label{eq:LaplaceProperFunction}
\mathcal{L}\{g(i,T-t)\}(u)=\frac{u^{m-1}+\alpha_{i,m-2}u^{m-2}+\cdots+\alpha_{i,0}}{u^m+\beta_{m-1}u^{m-1}+\cdots +\beta_0}.
\end{equation}
%where the constants $\alpha_{i,k}, \beta_{k}$, depend on the parameters $\mu(
\end{theorem}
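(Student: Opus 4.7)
The plan is to apply Cramer's rule to the linear system (\ref{eq:system}) and then perform a careful degree count in $u$. Let $M(u)$ denote the coefficient matrix in (\ref{eq:system}), with diagonal entries $M(u)_{jj} = D_j = u + \delta(j) + q_j$ and off-diagonal entries $M(u)_{jk} = -q_{jk}$, and let $M_i(u)$ denote the matrix obtained from $M(u)$ by replacing its $i$-th column with the all-ones vector. Cramer's rule gives
\[
\mathcal{L}\{g(i, T-t)\}(u) = \frac{\det M_i(u)}{\det M(u)},
\]
so the task reduces to showing that both numerator and denominator are polynomials of the claimed degrees and leading coefficients.

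For the denominator, note that $u$ appears only on the diagonal of $M(u)$, and only linearly. Expanding $\det M(u)$ as a sum over permutations $\sigma$, the product $\prod_j M(u)_{j,\sigma(j)}$ picks up one factor of $u$ per fixed point of $\sigma$. Since a non-identity permutation of $m$ elements has at most $m-2$ fixed points, both the $u^m$ and the $u^{m-1}$ contributions come solely from $\sigma = \mathrm{id}$, and $D_1 D_2 \cdots D_m$ yields a monic polynomial of degree exactly $m$. Hence $\det M(u) = u^m + \beta_{m-1} u^{m-1} + \cdots + \beta_0$, which is independent of $i$ and accounts for the common denominator.

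For the numerator, expand $\det M_i(u)$ along the $i$-th (all-ones) column:
\[
\det M_i(u) = \sum_{k=1}^m (-1)^{i+k}\, C_{k,i}(u),
\]
where $C_{k,i}(u)$ denotes the $(k,i)$-minor of $M(u)$. The $(i,i)$-minor is the determinant of an $(m-1)\times(m-1)$ matrix with the same sparsity pattern as $M(u)$, and the identical fixed-point argument shows it is monic of degree $m-1$. For $k \neq i$, any monomial in the permutation expansion of $C_{k,i}(u)$ can pick up a factor of $u$ only from diagonal entries $D_j$ with $j \notin \{k,i\}$, of which there are $m-2$, so $\deg C_{k,i}(u) \leq m-2$. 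Only the $k=i$ summand therefore contributes to the leading $u^{m-1}$ term, giving $\det M_i(u) = u^{m-1} + \alpha_{i,m-2} u^{m-2} + \cdots + \alpha_{i,0}$ and hence the rational form (\ref{eq:LaplaceProperFunction}). The only thing to be careful about is the degree drop in the numerator induced by replacing a diagonal $u$-bearing column by a constant one; once that bookkeeping is made explicit through the cofactor expansion, there is no real obstacle in the argument.
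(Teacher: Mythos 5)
Your argument is correct and follows essentially the same route as the paper, which simply asserts that the strictly proper rational form is a direct consequence of Cramer's rule applied to (\ref{eq:system}), with the denominator being the determinant of the coefficient matrix. Your fixed-point count over permutations and the cofactor expansion along the all-ones column merely make explicit the degree bookkeeping that the paper leaves implicit.
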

%In particular, the denominator is the determinant of the coefficient matrix.

The strictly proper rational function form of the Laplace transform is a direct consequence of Cramer's rule.
The denominator of (\ref{eq:LaplaceProperFunction}) coincides with the determinant of the coefficient matrix on the left of (\ref{eq:system}).
The problem of finding the functions $g(i,T-t)$ for each $i=1,2,\ldots,m$ then reduces to %finding the roots of the polynomials in
decomposing the rational function in (\ref{eq:LaplaceProperFunction}) using partial fractions, and apply the inverse Laplace transform.

The Laplace transform method presented in this section provides an alternative to numerically solving the associated coupled linear system of ordinary differential equations,
which is the classical approach. Moreover, this idea may be exploited in other applications involving hybrid systems.

It is important to remark that in view of the simple functions involved in (\ref{eq:system_gs}), we only needed basic properties of the Laplace transform in order to undertake the method, namely, the linearity and that the Laplace transform of the convolution is the product of the Laplace transforms. These properties also hold for other transforms, such as the Fourier transform and the Z transform (a discrete version of the Laplace transform, which may be used to solve coupled difference equations). Thus, other transforms can also be used to compute the value function, as long as the transforms of the relevant functions are well-defined.

\section{Numerical examples} \label{sec:numerical}

Consider a two-states Markov chain and the following parameters: the term is $T-t=0.5$, and
\[
\begin{aligned}
q_1=20, & \quad \mu(1)=0.5, & \quad \sigma(1)=0.3, & \quad r(1)=0.05,\\
q_2=30, & \quad \mu(2)=0.1, & \quad \sigma(2)=0.5, & \quad r(2)=0.05.
\end{aligned}
\]
We think of state 1 of the Markov chain as modeling a bull market regime, and state 2 a bear market regime.
The parameters reflect empirical studies on bull and bear markets phenomena:
bull markets tend to last longer than bear markets, so we make $q_1$ smaller than $q_2$.
Also, stock returns are lower and volatility is higher during the bear market (c.f. \cite{Gonzalez_etal}).

Table 1 reports the value of the functions $g(1,T-t)$ and $g(2,T-t)$ under different risk-aversion parameters.
The expected utility of terminal wealth increases with the level of risk-aversion, and it is higher if the trading starts in the bull market regime.

\begin{table}[ht] \label{table:2states_gamma}
\caption{Values of $g(i,T-t)$ in a model with two regimes under different $\gamma$.}
\centering
\begin{tabular}{ |c|c|c|c|c|c| }
\hline
$ \gamma $ & \multicolumn{1}{p{2cm}|}{\centering  $ g(1,T-t) $} & \multicolumn{1}{p{2.5cm}|}{\centering  $ g(2,T-t) $}  \\ \hline
0.1 & 1.0419994 & 1.03940982   \\ [0.5ex]
0.3 & 1.1699096 &  1.1587431 \\ [0.5ex]
0.5 & 1.4372864 & 1.40552191  \\ [0.5ex]
0.9 & 28.9779109 &  23.8092044\\ [1ex]
\hline
\end{tabular}
\end{table}

We compare the behavior of the value function as the rate of leaving the bull regime approaches zero in Table 2.
The exponential factor in Merton's solution (\ref{eq:Merton_value}) with $\gamma=0.1, \mu=\mu(1), \sigma=\sigma(1), r=r(1)$ is given by
\[
e^{\left[\frac{\gamma}{1-\gamma}\frac{(\mu-r)^2}{2\sigma^2}+\gamma r  \right](T-t)}\approx 1.067159
\]
Intuitively, if $q_1$ is very close to zero then the Markov chain stays longer in the bull regime and $g(1,T-t)$ is close to Merton's exponential factor.
The numerical experiment is presented in Table 2.

\begin{table}[ht] \label{table:2states_q}
\caption{Behavior of $g(i,T-t)$ as $q_1$ tends to zero.}
\centering
\begin{tabular}{ |c|c|c|c|c|c| }
\hline
$ q_1 $ & \multicolumn{1}{p{2cm}|}{\centering  $ g(1,T-t) $} & \multicolumn{1}{p{2.5cm}|}{\centering  $ g(2,T-t) $}  \\ \hline
20 & 1.0419994 	  & 1.0394098       \\  [0.5ex]
10 & 1.0515394    & 1.0482755    \\  [0.5ex]
1  & 1.0651643    & 1.060905       \\  [0.5ex]
0.1 & 1.0669539   & 1.0625608     \\  [0.5ex]
0.001 &  1.067157 &  1.062749      \\
\hline
\end{tabular}
\end{table}

\pagebreak
Finally, Table 3 reports the value of the functions $g(1,T-t)$, $g(2,T-t)$ and $g(3,T-t)$ under different risk-aversion parameters
under a three-states Markov chain model and assuming the following parameters: the term is $T-t=0.5$, and
\[
\begin{aligned}
q_1=20, & \quad q_{12}=1,  & q_{13}=19, \quad \mu(1)=0.5, & \quad \sigma(1)=0.3, & \quad r(1)=0.05,\\
q_2=30, & \quad q_{21}=25, & q_{23}=5,  \quad \mu(2)=0.1, & \quad \sigma(2)=0.5, & \quad r(2)=0.05, \\
q_3=10, & \quad q_{31}=2,  & q_{32}=8,  \quad \mu(3)=0.3, & \quad \sigma(3)=0.7, & \quad r(3)=0.05.
\end{aligned}
\]

\begin{table}[ht] \label{table:3states_gamma}
\caption{Values of $g(i,T-t)$ with three regimes under different $\gamma$.}
\centering
\begin{tabular}{ |c|c|c|c|c|c|c| }
\hline
$ \gamma $ & \multicolumn{1}{p{2cm}|}{\centering  $ g(1,T-t) $} & \multicolumn{1}{p{2.5cm}|}{\centering  $ g(2,T-t) $} & \multicolumn{1}{p{2.5cm}|}{\centering  $ g(3,T-t) $} \\ \hline
0.1 &    1.0241558      & 1.0221445      & 1.0195109   \\ [0.5ex]
0.3 &    1.0946191      & 1.08632994     & 1.0755275   \\ [0.5ex]
0.5 &    1.231227       & 1.20948267     & 1.1813947   \\ [0.5ex]
0.9 &    7.9600128      & 6.71060261     & 5.31815998   \\ [0.5ex]
\hline
\end{tabular}
\end{table}

%For simplicity we consider the case of two states, $m=2$.
%If there is no regime-switching, Merton's solution is
%\[
%g(T-t)=e^{\left[\frac{\gamma}{1-\gamma}\frac{(\mu-r)^2}{2\sigma^2}+\gamma r  \right](T-t)}.
%\]

%In the case of regime-switching, we know
%\[
%g(i,T-t)=\frac{1}{u_1-u_2}\left[(u_1+\alpha_i)e^{u_1(T-t)}-(u_2+\alpha_i)e^{u_2(T-t)}\right].
%\]

%\section{Other stuff}

%Impact of risk-aversion: analyze the impact of the risk-aversion parameter.

%Optimal wealth dynamics?

%Economic analysis: illustrate dependence of the optimal strategies on time, risk aversion (similar to Capponi and Figueroa-Lopez?)

\section{Conclusion}

In this paper, we considered an agent who aims to maximize his utility of wealth
by optimally investing on a bond and a risky asset with stochastic regime-switching dynamics.
Assuming that the agent's risk preference is modeled by a power utility function,
%in the sense that for $x,y>0$, $U(xy)=U(x)f(y)$, for some non-negative smooth function $f$ which includes the power utility.
we derived the optimal trading strategy by solving the associated Hamilton-Jacobi-Bellman equation which is of degenerate parabolic type,
and showed that the value function is given explicitly in terms of the inverse Laplace transforms of the solution of a given linear system which depends on the parameters of the problem. This method may be applied to other optimization problems involving regime-switching, and other transforms having the same basic properties such as linearity and the transform of the convolution equal the product of the transforms can definitely be applied.
The key feature of the power utility is that we can reduce the number of state variables in the associated PDE by one, which allows us to write the explicit solution for the value function by means of the Feynman-Kac Theorem. Such representation facilitates the computation using Laplace transforms.

%In future work, we shall address the case of iso-elastic functions satisfying the more general condition $U(xy)=U(x)f(y)+g(y)$ where $f(y),g(y)>0$ are smooth functions. This class %includes the logarithmic utility $U(x)=\ln(x)$. For this class, $U(0)$ is not necessarily finite, as is the case of the logarithmic utility.
%At first sight, the results should be fairly similar, however the fact that $\lim_{x\downarrow 0}U(x)$ may diverge introduces some complex technicalities.
%Also, it would be interesting to examine further the linear system (\ref{eq:system}).  %
%For instance, we expect the roots of the denominator of (\ref{eq:LaplaceProperFunction}) to be real numbers in order to recover the form of the function $g(i,T-t)$ in terms of exponential functions, which agree with the non regime-switching case studied by Merton. However, is this true in general? Under what conditions on the parameters are the roots real numbers? These and many other interesting questions arise and we plan to investigate further on this topic and in the general setup using any set of parameters $\delta(i)$ as in Section \ref{sec:Laplace}.

\section*{Acknowledgement}

The author would like to thank the referees for all their valuable comments which have tremendously helped to
improve the paper in many aspects.

\end{document}